\crefname{algocf}{alg.}{algs.}
\Crefname{algocf}{Algorithm}{Algorithms}
\newtheorem{theorem}{Theorem}
\newtheorem{definition}{Definition}
\newtheorem{proposition}{Proposition}
\newtheorem{lemma}{Lemma}
\newcommand{\ket}[1]{|#1\rangle}
\newcommand{\tree}{{\cal T}}
\DeclareMathOperator{\CliNR}{CliNR}
\newcommand{\N}{{\mathbb{N}}}
\newcommand{\plog}{{p_{\log}}}
\DeclareMathOperator{\restart}{res}
\DeclareMathOperator{\ovspace}{Q}
\DeclareMathOperator{\ovtime}{G}
\newcommand{\qubitoverhead}{{\omega_{\ovspace}}}
\newcommand{\gateoverhead}{{\omega_{\ovtime}}}
\newcommand{\idle}{{idle}}
\newcommand{\idlep}{{Pidle}}
\newcommand{\gp}{g_p}
\newcommand{\pres}{{p_\text{res}}}
\newcommand{\av}{A_V}
\renewcommand{\ap}{A_P}
\newcommand{\bv}{B_V}
\newcommand{\ai}{A_I}
\newcommand{\vp}{\vec{P}}
\newcommand{\ts}{\hat{s}}
\newcommand{\ie}{{\em i.e.~}}
\newcommand{\rsp}{{RSP}}
\newcommand{\rsv}{{RSV}}
\renewcommand{\rsi}{{RSI}}
\begin{document}

\title{Recursive Clifford noise reduction}
\author{Aharon Brodutch}
\affiliation{IonQ Inc.}

\author{Gregory Baimetov }
\affiliation{IonQ Inc.}
\affiliation{Department of Mathematics, University of Washington, Seattle, WA 98195-4350}

\author{Edwin Tham}
\affiliation{IonQ Inc.}

\author{Nicolas Delfosse}
\affiliation{IonQ Inc.}

\date{November 2025}

\begin{abstract}
Clifford noise reduction  $\CliNR$ is a partial error correction scheme that reduces the logical error rate of Clifford circuits at the cost of a modest qubit  and gate  overhead. The $\CliNR$ implementation of an  $n$-qubit Clifford circuit of size $s$ achieves a vanishing logical error rate if $snp^2\rightarrow 0$ where $p$ is the physical error rate.
Here, we propose a recursive version of $\CliNR$ that can reduce errors on larger circuits with a relatively small gate overhead. 
When $np \rightarrow 0$, the logical error rate can be vanishingly small. This implementation requires $\left(2\left\lceil \log(sp)\right\rceil+3\right)n+1$ qubits and at most $24 s \left\lceil(sp)^4\right\rceil $ gates. 
Using numerical simulations, we show that the recursive method can offer an advantage in a realistic near-term parameter regime. When circuit sizes are large enough, recursive $\CliNR$ can reach a lower logical error rate than the original $\CliNR$ with the same gate overhead.  The results offer promise for reducing logical errors in large Clifford circuits with relatively small overheads. 
\end{abstract}

\maketitle

\section{Introduction}

Two major obstacles for reaching utility scale quantum computation are increasing the number of qubits and reducing error rates. Quantum error correction (QEC) offers a way to reduce the logical error rates at the cost of increasing physical resources in a scalable way \cite{aharonov1997fault, aliferis2005quantum}. Current quantum computers do not support enough qubits for full-fledged fault-tolerant QEC. Moreover, the overheads for  implementing  fault-tolerant logical operations are high, both in qubit and gate counts, notwithstanding huge improvements over the last few years~\cite{Litinski_2019, bravyi2024high, gidney2025factor2048bitrsa, ibm}. At the same time, physical error rates and raw qubit counts are already sufficient for showing the advantage of quantum computers~\cite{arute2019quantum,DeCross_2025, Morvan_2024}. These results provide some evidence that fault-tolerant QEC might not be necessary to show some initial utility scale computation, especially if methods for error mitigation and partial error correction can be used to reduce the impact of errors \cite{myths}.

Error-mitigation (EM) refers to a collection of techniques aimed at reducing the impact of errors on the output of a quantum computer.
Typically, EM techniques re-shape noise ({\it e.g.} twirling \cite{twirl}), post-select shots identified as erroneous (e.g. flag error-detection \cite{debroy2020extended, ibmflags}), post-process noisy data ({\it e.g.} filtering \cite{Chen_2024}), or a combination thereof \cite{neutrino}. 
Of particular interest are coherent Pauli checks (CPC) \cite{roffe2018protecting,debroy2020extended,ibmflags} that require no explicit encoding, and are readily applied to application circuits with few ancilla qubit overheads. The main disadvantages of CPC are that it is limited to reducing errors in Clifford circuits and that any single detected error requires the shot to be discarded and the computation restarted anew. As a result of the latter, the restart rate approaches unity, as circuit size grows. Clifford Noise Reduction $\CliNR$ \cite{delfosse2024low} deals with this problem by building Clifford subcircuits offline and injecting them into the main circuit when they pass verification. The result is a low-overhead scheme that can be used to reduce logical error rates in Clifford subcircuits. The method is however limited to subcircuits of size at most of the order of $1/(np^2)$ where $p$ is the physical error rate and $n$ is the number of qubits. Here we present a generalization of $\CliNR$ which we call \emph{Recursive $\CliNR$}. It provides low-overhead error reduction for sub-circuits of any size $s$, as long as $np$ is sufficiently small. As in \cite{delfosse2024low}, we provide an analytical proof that this is asymptotically true for a simple noise model with no idle noise and provide numerical evidence for  more complex noise models. 

Recursive $\CliNR$, recursively applies $\CliNR$ to a sequence of $\CliNR$ sub-circuits. The main advantage is that the injection circuits, which don't get verified in the standard $\CliNR$, will now get verified at the next level of the recursion. However, new unverified injection steps are introduced at the next level. These get verified at the next level {\it etc.}, with the aim of ultimately having a small number of unverified injection steps at the top level.  The number of $\CliNR$ subcircuits (each with its injection step) at each level of the recursion is determined by a tree which we call the \emph{$\CliNR$ tree}. The right choice of parameters for this tree can lead to better performance than standard $\CliNR$ - lower logical error rates at the same circuit size.

The rest of this paper is organized as follows. In the next section we review some background material including the noise model and the original $\CliNR$ scheme. We identify the main source of errors that cannot be detected and the limitation of the scheme.  In \cref{sec:recursive} we introduce \emph{Recursive $\CliNR$} and the \emph{$\CliNR$ tree} used in its construction. An algorithm for constructing the corresponding circuit is provided in \cref{algorithm:generalized_CliNR_circuit}. In \cref{sec:boundedtree} we introduce a specific implementation of Recursive $\CliNR$ called the \emph{uniformly bounded} implementation. It   allows us to make our main statement (\cref{theorem:main}): Recursive $\CliNR$ can be used to reach a vanishing logical error rate with a gate overhead that's polynomial in $sp$, as long as $np\rightarrow 0$. We do this using a sequence of technical statements which we prove in \cref{sec:technical}.  In \cref{sec:numerical} we present numerical results that show the advantage of Recursive $\CliNR$ at $n=70$ and $n=400$. In the conclusion section (\cref{sec:conclusions}) we discuss potential improvements to the implementation.

\section{Background}

In this section, we review the standard circuit-level noise model and the original CliNR scheme~\cite{delfosse2024low}.

\subsection{Noise model} \label{sec:noise_model}

We consider quantum circuits containing single-qubit preparations, unitary gates and single-qubit measurements.
An important class of quantum circuits is the set of $n$-qubit {\em Clifford circuits}. These circuits act on $n$-qubit input states in the same way as a Clifford unitary $U$ (potentially after discarding ancillas). A Clifford unitary preserves   Pauli operators by conjugation, \ie for all Pauli operators $P$, the operation $UPU^{\dagger}$ is also a Pauli operator. 

We consider the standard circuit-level noise model where each circuit operation is followed by a random Pauli error with rate $p$. The Pauli error is selected uniformly in the set of non-trivial Pauli errors acting on the support of the operation.
Measurement noise is represented by a flip of the measurement outcome with probability $p$.
Errors corresponding to different operations are assumed to be independent. In our simulation we also include idle noise and use different error rates for different options (see \cref{sec:numerical}).  

The circuit-level noise model introduces errors in a quantum circuit, which translate into errors on the output state of the circuit, that we call the {\em output error}.

An {\em implementation} of a Clifford circuit $C$ is a circuit $C'$ that produces the same output state as $C$ for any input state in the absence of error.
The circuit-level noise induces a probability distribution over the set of output errors of an implementation.
The {\em logical error rate} of an implementation is defined to be the probability that the output error has a non-trivial effect on the output of the circuit. We say that an implementation is \emph{ideal} when the logical error rate is 0, otherwise we say that the implementation is noisy. 

The most naive implementation of a circuit is its {\em direct implementation} $C'=C$.
In general, an implementation $C'$ of a Clifford circuit $C$ uses additional qubits and gates. 
The extra qubits are traced out before the end of the circuit.
The {\em qubit overhead}, $\qubitoverhead$,  is defined as the number of qubits of $C'$ divided by the number of qubits in $C$.
The {\em gate overhead}, $\gateoverhead$,  is the expected number of gates in $C'$ divided by the number of gates in $C$.
A gate may be executed multiple times because of a repeat until success loop for instance. 
Then, it is counted as many times as it is repeated in the gate overhead.

Throughout this work we divide the implementation of a Clifford circuit $C$ into blocks. If a fault during block $B$ has a non-trivial effect on the output of $B$ then we say that it is a \emph{logical error in $B$}. A logical error in $B$ does not necessarily imply a logical error in the implementation of  $C$.

\subsection{Review of CliNR}\label{sec:clinr} 

$\CliNR$ was introduced in \cite{delfosse2024low} as a protocol that can reduce the logical error rate in the implementation of a Clifford circuit. 
Consider a unitary $n$-qubit Clifford circuit $C$, called the \emph{input circuit}, with size $s$ ({\it i.e.} $C$ has $s$ unitary gates). Given an integer $r$, the $\CliNR$ implementation, denoted $\CliNR_{1,r}(C)$, is defined as the following 3 step process (see \cref{fig:clinr1}) : 
\begin{enumerate}
    \item {\bf Resource state preparation (RSP)}. A $2n$-qubit resource (stabilizer) state $\ket{\varphi}$ is prepared as follows. First $n$ Bell pairs are prepared (using at most $A_{P} n$ noisy gates). 
    Next, the circuit $C$ is implemented on $n$ of these qubits, one from each pair. The implementation of $C$ requires $s$ noisy gates. 
    
    The resource state $\ket{\varphi}$ generated by an ideal implementation of RSP is stabilized by a stabilizer group $\mathcal{S}(\varphi)$. An error during the noisy implementation of RSP produces a state $E\ket{\varphi}$. If $E$ does not commute with $\mathcal{S}(\varphi)$ then it is a logical error in RSP. 

    \item {\bf Resource state verification (RSV)}. The resource state is verified by measuring $r$ stabilizers using \emph{stabilizer measurements}.  Here we assume that the stabilizers are chosen randomly and uniformly from $\mathcal{S}(\varphi)$, but other choices can lead to better performance~\cite{tham2025optimized}. A logical error in RSP has probability $1/2$ to commute with a stabilizer chosen at random from $\mathcal{S}(\varphi)$. In an ideal implementation of the stabilizer measurement, a logical error that anti-commutes with the stabilizer is detected. If an error is detected we restart from RSP. In a noisy implementation an error during the stabilizer measurement can lead to a false detection or to a subsequent logical error. Each stabilizer measurement requires a single ancilla. We can define constants $A_V$ and $B_V$ such that the implementation of RSV requires at most $A_Vn + B_V$ noisy gates. An ideal implementation will reduce the logical error rate by a factor of $2^{r}$ (see \cref{lemma:checks}). 
    
    \item {\bf Resource state injection (RSI)}. If no errors are detected during RSV, the subcircuit $C$ is applied to an $n$-qubit input state by consuming the resource state $\ket{\varphi}$. We can define a constant $A_I$ such that this process takes at most $A_I n$ noisy gates.  The rails hosting the input state  (top 3 rails in \cref{fig:clinr1} (a) and (b)) are called the \emph{input rails}. 
    
\end{enumerate}
Therein and throughout this work, $A_P,\; A_V, \; B_V $ and $A_I$ are constants as defined above. These depend on the specifics of the implementation, but are independent of $s$,  $p$, $n$ and $C$.  We refer to these three parts as the RSP, RSV and RSI blocks {\it e.g.} RSI block refers to  the implementation of the RSI circuit in \cref{fig:clinr1} (b). A $\CliNR_1$ block refers to the implementation of the entire circuit in \cref{fig:clinr1} (b). We use $\CliNR_1$ as a general term for the $\CliNR$ protocol above, to distinguish it from Recursive $\CliNR$ defined in \cref{sec:recursive}.

When $C$ is long, it can be advantageous to break it down into $t$ subcircuits $C_i$ and then apply $\CliNR_{1,r}$ to each subcircuit. The logical error rate and restart rate in each $\CliNR_{1,r}(C_i)$ block are lower than in $\CliNR_{1,r}(C)$ leading to an advantage for $t>1$ in some parameter regimes. In \cite{delfosse2024low} this technique (called $\CliNR_t$) was used to show that it is possible to achieve a vanishing logical error rate when $snp^2\rightarrow 0$. The gate overhead in this regime can be bounded by $\gateoverhead \le 2$.

One disadvantage of this technique is that errors during each RSI block cannot be detected. In the following, we tackle this issue with a recursive $\CliNR$ scheme. This scheme exploits the fact that $\CliNR_1$ blocks are Clifford circuits which can themselves be used as the input circuit for a larger $\CliNR_1$ block. 

\section{Recursive CliNR } \label{sec:recursive}

This section introduces the recursive generalization of $\CliNR$ which we call \emph{Recursive $\CliNR$}. We begin with the $\CliNR$ tree. 

An {\em ordered tree} is a rooted tree such that for any vertex $u$, the children of $u$ have a specific order $v_0, v_1,$ {\em etc}.
If a tree is ordered, we can label its vertices as 
$v_{\ell, i}$, where $\ell \in \N$ is the distance to the root, which we call the {\em level}, and $i$ is the index of the vertex among the vertices at level $\ell$.
For example, the vertices of a 4-regular tree are indexed as $v_{0, 0}, v_{1, 0}, v_{1, 1}, v_{1, 2}, v_{1, 3}, v_{2, 0}, v_{2, 1}, $ {\em etc}.
\cref{fig:boundedT} (a) shows an example of ordered tree with labeled vertices. 

A {\em $\CliNR$ tree}, denoted $\tree({\bf s}, {\bf r})$ or simply $\tree$, is a finite ordered tree, equipped with two functions ${\bf s}$ and ${\bf r}$ defined over the vertex set of $\tree$.
These functions assign non-negative integers ${\bf s}(u)$ and ${\bf r}(u)$ to each vertex $u$ of the tree. 
We require that, for any vertex $u$, the sum of the values ${\bf s}(v)$ associated with its children $v$ is equal to ${\bf s}(u)$.

Let $\tree$ be a $\CliNR$ tree with depth $D$ and let $C$ be a $n$-qubit Clifford circuit with size ${\bf s}(v_{0,0})$.
For all levels $\ell=0,1,\dots, D$, we have
$
\sum_{i} {\bf s}(v_{\ell, i}) = {\bf s}(v_{0,0})
$
where $i$ runs over the indices of all the vertices at level $\ell$ of the tree.
It is possible to write  $C$ as the concatenation of subcircuit.
$
C_{\ell, 0}, C_{\ell, 1}, \dots
$
where $C_{\ell, i}$ is a subcircuit of $C$ with size ${\bf s}(v_{\ell, i})$.
Moreover, for all $\ell \leq D-1$, the circuit $C_{\ell, i}$ is the concatenation of the subcircuits
$
C_{\ell+1, j}
$
where $v_{\ell+1, j}$ runs over the children of $v_{\ell, i}$ in $\tree$.

The {\em recursive $\CliNR$ implementation} of $C$, denoted $\CliNR(\tree, C)$, is the implementation obtained by applying the procedure described in \cref{algorithm:generalized_CliNR_circuit} to the circuit $C$.
\cref{fig:boundedT} (c) provides the circuit for the implementation of $\CliNR(\tree,C)$ where $\tree$ is the tree in \cref{fig:boundedT} (a).

\begin{algorithm}
\caption{Algorithm for constructing a recursive $\CliNR$ circuit}
\label{algorithm:generalized_CliNR_circuit}
\DontPrintSemicolon
\SetKwInOut{Input}{input}
\SetKwInOut{Output}{output} 
    \Input{A $n$-qubit Clifford circuit $C$ and a $\CliNR$ tree $\tree = \tree({\bf s}, {\bf r})$.}
    \Output{The implementation $\CliNR(\tree, C)$ of $C$.}
    Set $C' = C_{0,0}$.\;
    
    Add one ancilla to $C'$.
    
    \For{all level $\ell = 0,1,\dots, D-1$}
    {
        Add $2n$ ancilla qubits to $C'$.\;
        \For{all vertices $v_{\ell+1, j}$}
        {
            Let $r = {\bf r}(v_{\ell+1, j})$.\;
            Replace the sub-circuit $C_{\ell+1, j}$ inside $C'$ by $\CliNR_{1, r} (C_{\ell+1, j})$ using the $2n+1$ ancilla qubits added at level $\ell$.\;\label{algoline:replace}
        }
    }
    \Return{$C'$.}
\end{algorithm}

In \cref{algorithm:generalized_CliNR_circuit}, we do not use the value of $\bf r$ at the root. 
One could extend the construction by adding ${\bf r}(v_{0,0})$ stabilizer measurements at the end of the circuit.

The implementation $\CliNR(\tree, C)$ is a proper generalization of the $\CliNR$ implementations introduced in \cite{delfosse2024low}.
Indeed, we recover the implementation $\CliNR_{1, r}(C)$ by taking a tree with two vertices $v_{0,0}$ and $v_{1,0}$ such that ${\bf r}(v_{1,0}) = r$.
The implementation $\CliNR_{t, r}(C)$ is derived from a depth-one tree with a root and $t$ children $v_{1,j}$ for $j=0,1\dots, t-1$ with ${\bf r}(v_{1,j}) = r$.

\begin{figure}
    \centering
\hspace{-28pt}
\includegraphics[width=1\linewidth]{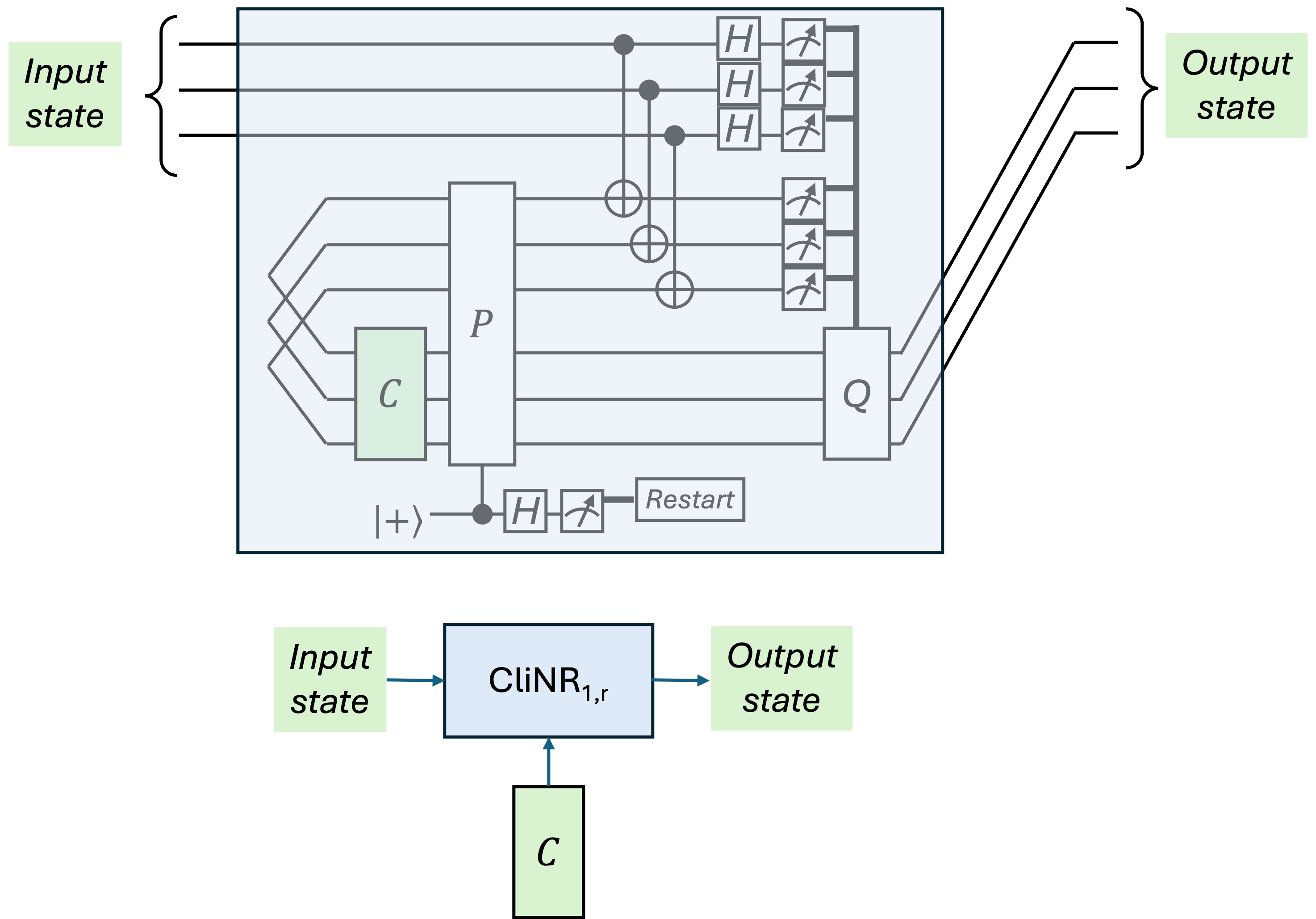} 
\vspace{-10pt}

(a)

\vspace{10pt}

\includegraphics[width=1\linewidth]{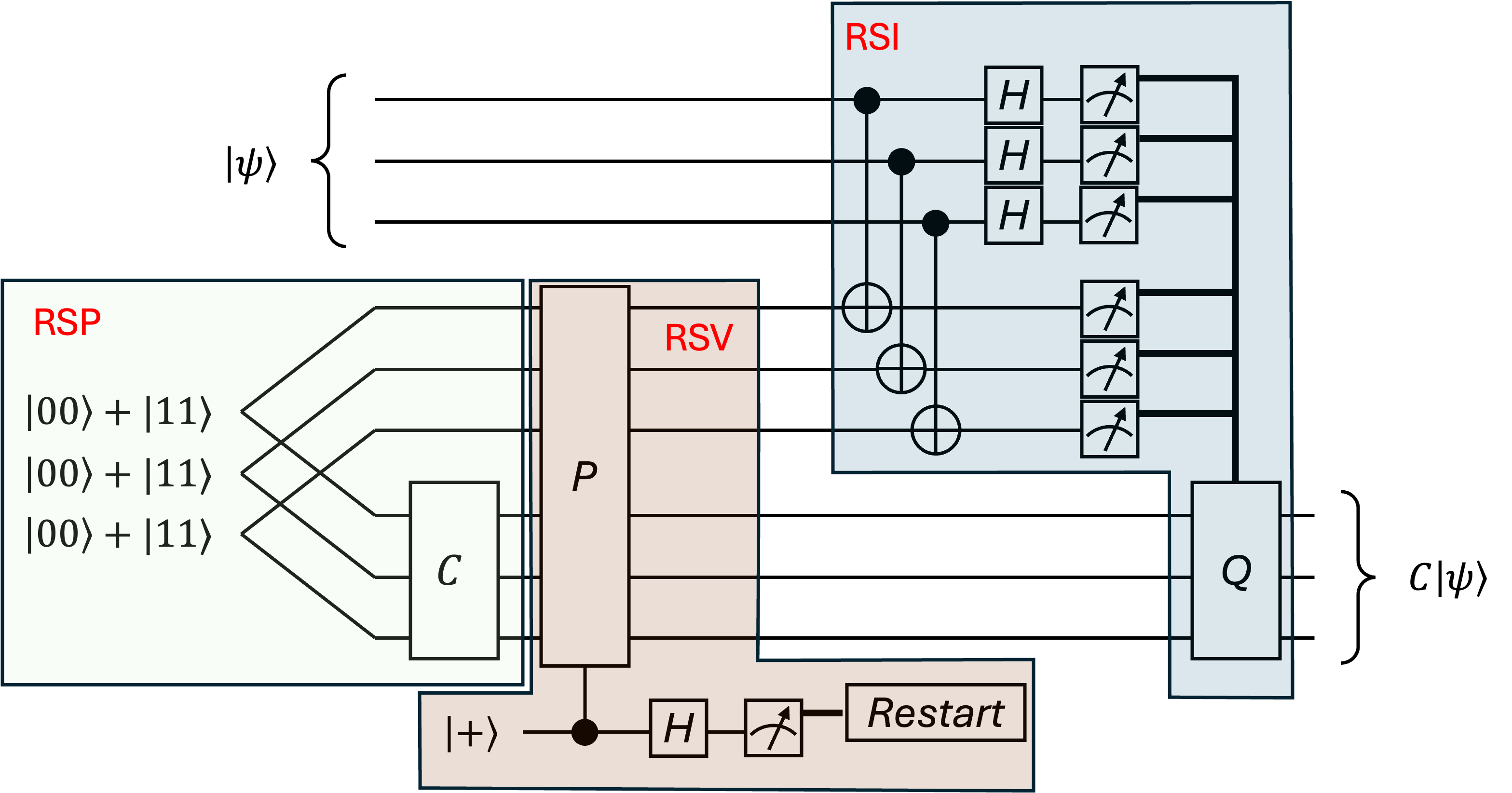}

    (b)

    \caption{A $\CliNR_1$ circuit. (a) Schematic of the main components for a $\CliNR$ circuit, \emph{input state}, \emph{output state}, and \emph{input circuit} $C$.  (b) The 3 blocks used to build a $\CliNR$ circuit: resource state preparation (RSP), resource state verification (RSV) and resource state injection (RSI). The number of noisy gates in each of these blocks is $\ap n + s$, $(\av n +\bv)r$ and $\ai n$ respectively. $\ap, \av, \bv, \ai$ are constants that depend on the implementation ({\it e.g.} number of noisy gates used to implement a stabilizer measurement). For simplicity only a single stabilizer measurement is shown in RSV, in general the RSV block will have multiple stabilizer measurements, all of these can use the same ancilla (bottom rail).  
    }
    \label{fig:clinr1}
\end{figure}

There are many possible choices for the stabilizers used to define  $\CliNR(\tree, C)$. 
The only requirement is that they belong to the stabilizer group of the resource state consumed by in $\CliNR_{1, r}$ in \cref{algorithm:generalized_CliNR_circuit}. 
Here, we select random stabilizer generators following \cite{delfosse2024low}.
Optimization strategies to select these stabilizers more carefully were proposed recently in~\cite{tham2025optimized}. 

The following is a direct consequence of \cref{algorithm:generalized_CliNR_circuit}:
\begin{proposition}[qubit overhead for Recursive $\CliNR$] \label{lemma:qubit_overhead} The recursive implementation of an $n$-qubit Clifford circuit requires   $(2D+1)n + 1$ qubits.
\end{proposition}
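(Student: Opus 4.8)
The plan is to read the qubit count directly off \cref{algorithm:generalized_CliNR_circuit}, since the only places qubits are introduced into the circuit $C'$ are the explicit allocation lines of that algorithm. First I would account for the starting register: the line that sets $C' = C_{0,0}$ contributes exactly $n$ qubits, because $C_{0,0} = C$ is the $n$-qubit input circuit, and the subsequent line that adds one ancilla contributes a single further qubit, which will serve as the common ancilla for all stabilizer measurements performed anywhere in the construction.

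Next I would handle the outer loop. It executes once for each level $\ell = 0, 1, \dots, D-1$, hence $D$ times, and each iteration performs exactly one allocation of $2n$ ancilla qubits. This yields $2nD$ further qubits, and summing the three contributions gives $n + 1 + 2nD = (2D+1)n + 1$, the claimed bound.

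The only point needing justification beyond this bookkeeping is that the inner loop over the vertices $v_{\ell+1, j}$ introduces no new qubits: each replacement of $C_{\ell+1, j}$ by $\CliNR_{1,r}(C_{\ell+1, j})$ reuses the $2n+1$ ancillas already present, namely the $2n$ allocated in the current outer iteration together with the single global ancilla. I would argue this is legitimate in two steps. First, a single $\CliNR_{1,r}$ block acting on an $m$-qubit input has ancilla footprint exactly $2m + 1$: the $2m$ qubits of the resource state $\ket{\varphi}$ prepared in RSP and consumed in RSI, plus one ancilla reused for the sequential stabilizer measurements of RSV, as reviewed in \cref{sec:clinr}. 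Second, the subcircuits $C_{\ell+1, j}$ at a fixed level are concatenated in time rather than run in parallel, so the ancillas freed at the end of $\CliNR_{1,r}(C_{\ell+1, j})$ are available to $\CliNR_{1,r}(C_{\ell+1, j+1})$; hence one batch of $2n$ ancillas suffices for the whole inner loop. By contrast, the batches at consecutive levels cannot be identified: the RSP block of a level-$\ell$ $\CliNR_{1,r}$ block contains the level-$(\ell+1)$ $\CliNR_{1,r}$ blocks nested inside it, so the two $2n$-qubit resource registers are simultaneously live. This is precisely why one fresh batch of $2n$ ancillas is required per level and the count does not collapse to something smaller.

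I do not expect a serious obstacle here; the statement is essentially a direct consequence of the algorithm, and the only care required is the reuse argument above, which follows immediately from the sequential (concatenated) structure of the subcircuits at each level combined with the $2m+1$-ancilla footprint of a single $\CliNR_1$ block. One could alternatively phrase the same argument as an induction on the depth $D$, but the direct allocation count seems cleanest.
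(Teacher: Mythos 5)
Your proof is correct and matches the paper's intent: the paper offers no explicit argument, declaring the count "a direct consequence of Algorithm 1," and your accounting — $n$ data qubits, $1$ shared measurement ancilla, and $2n$ fresh ancillas per outer-loop iteration over the $D$ levels — is exactly the direct read-off the paper has in mind. Your supplementary justification for why the inner loop reuses a single batch of $2n$ ancillas (sequential blocks at a fixed level, nested resource registers across consecutive levels) is also sound and agrees with the paper's own consistency checks (Lemma~2 gives $3n+1$ at $D=1$; the caption of the depth-$2$, $n=3$ example gives $16 = 5\cdot 3 + 1$).
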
 

When $n \rightarrow +\infty$ the qubit overhead is $\qubitoverhead \rightarrow 2D+1$. 
For $D = 1$, we recover the  $3\times$ qubit overhead of the original CliNR scheme as expected.  In the conclusions (\cref{sec:conclusions}) we briefly discuss other implementations that may perform better at the cost of larger qubit overheads. Gate overheads and logical error rates are studied and discussed in what follows. 

\section{The Uniformly bounded CliNR implementation }\label{sec:boundedtree}

Our main result below shows that when $np \rightarrow 0 $ it is possible to implement Recursive $\CliNR$ with a vanishing logical error rate. 
To assist with the proof, we define the \emph{Uniformly bounded} implementation (for a formal definition  see \cref{def:bounded}). This implementation ensures that every iterative use of $\CliNR_{1,r}$ has an input circuit with a logical error rate bounded by $2/3$ and an output state with a logical error rate upper bounded by a function of order $np$ (see \cref{fig:boundedT} (b)).  The logical error rate  for the full implementation is therefore bounded by a function of order $t_1np$ where $t_1$ is the number of vertices at level 1. 
For sufficiently large $D$, we get $t_1=1$. In this case the logical error rate is bounded from above and below by functions  of order $np$. 

A schematic of the implementation is given in \cref{fig:boundedT}.  
At the bottom level ($\ell=D$) we choose subcircuit $C_{D,j}$ with size at  most $\bar{s}_D = \left \lfloor \frac{2}{3p} \right \rfloor$. We then  
  choose  $r_{D,j}$ to bound the logical error rate from RSP to   the same order as  the logical error rate from RSV. This can be done by fixing all $r_{\ell,j}=R$ (defined in \cref{eq:R}).  The logical error rate for $\CliNR_1$ is then bounded by $Anp$ where $A$ is a constant (see  proof of \cref{lemma:logicalbounded}). 
At the next level up (and iteratively up to level 1) we take groups of $T$ vertices such that $T A np \le 2/3$ and feed them to the next vertex up ensuring that the logical error rate of each individual $\CliNR_1$ implementation in \cref{algorithm:generalized_CliNR_circuit} is bounded by $Anp$.

\begin{figure*}
    \centering

    \includegraphics[width=0.2\linewidth]{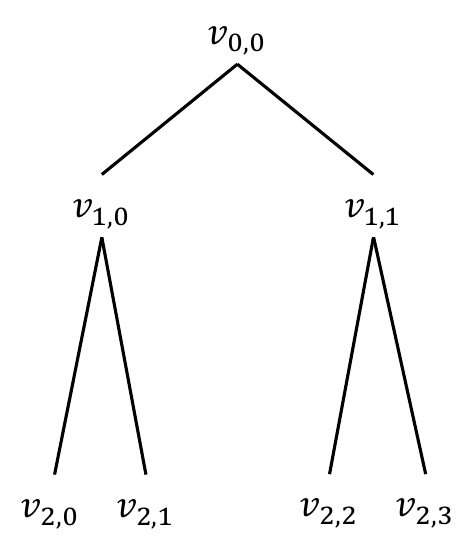}
    \hspace{1cm}
     \includegraphics[width=0.7\linewidth]{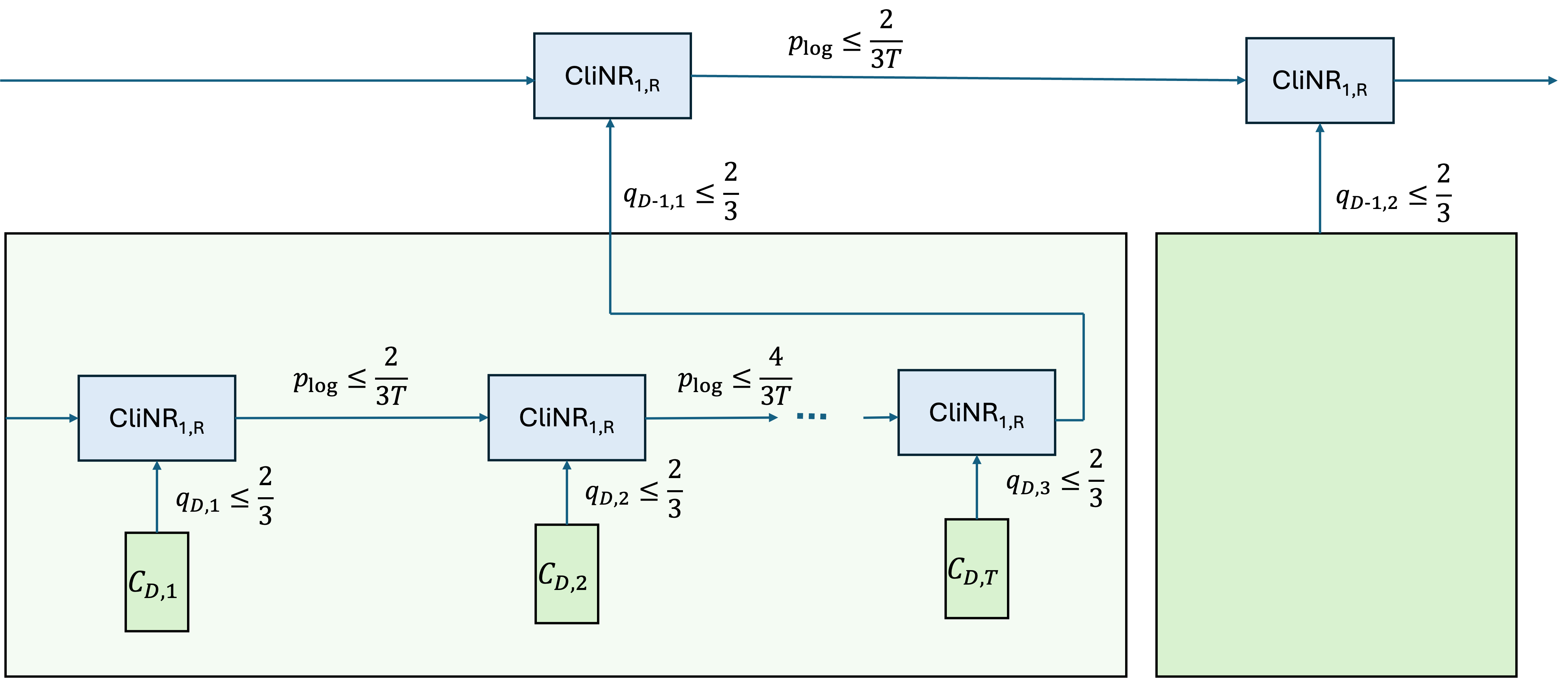}

    \hspace{-90pt} (a) \hspace{290pt} (b)

\vspace{10pt}

      \includegraphics[width=1\linewidth]{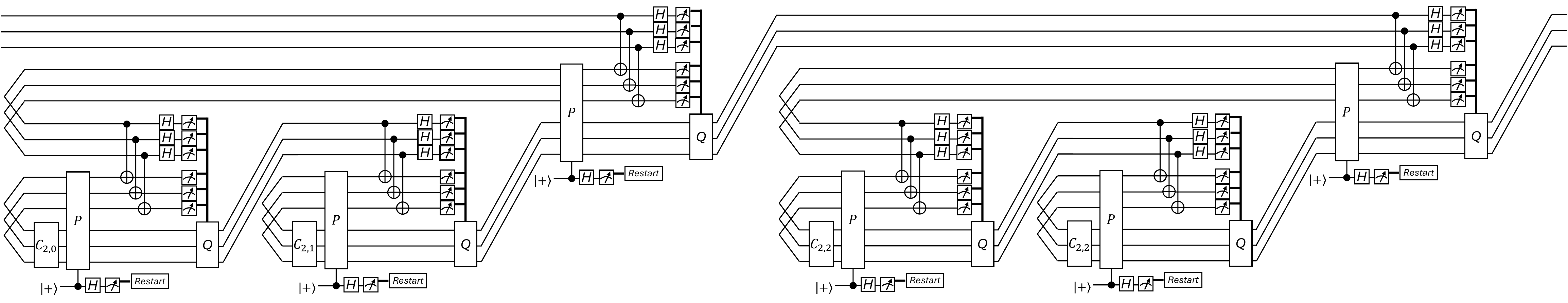}

      (c)
      
    \caption{Recursive $\CliNR$. (a) A depth 2 $\CliNR$ tree. Each vertex in the tree has two associated parameters ${\bf r}(v_{\ell,j})=r_{\ell,j}$ and ${\bf s}(v_{\ell,j})=s_{\ell,j}$ referring to the number of stabilizer measurements and size of the subcircuit $C_{\ell, j}$.  (b) Schematic for the uniformly bounded implementation of Recursive $\CliNR$. Shown is only part of the implementation starting at the beginning of the circuit. Note that the \emph{input  state} for the first (bottom left) $\CliNR_1$ block is determined by the previous level.   The entire green block is the \emph{input circuit} to the level above. The specific choice of tree parameters (based on $s$) ensures that the logical error rate  at each input ($q_{\ell,j}$) and output ($\plog$) state/circuit is bounded. Each implementation of $\CliNR_{1,R}$ has an input circuit with logical error rate of at most $2/3$ and an output with logical error rate at most $2/3T$.  The total error is therefore upper bounded by $(2/3T)t_1$ where $t_1$ is the number of vertices at level 1. 
    (c) Circuit implementation of Recursive $\CliNR$ with the tree from (a) and $n=3$. For simplicity $r_{\ell,j}=1$ for all $(\ell,j) \ne (0,0)$. Note that the same ancilla can be used for all  stabilizer measurements. The total qubit count is $16$.}

    \label{fig:boundedT}
\end{figure*}

\subsection{Main result}

Here, we state the main result of this paper which proves that the recursive CliNR scheme introduced in this work can achieve a vanishing logical error rate for $n$-qubit Clifford circuits with arbitrary size as long as $np \rightarrow 0$.
For comparison, the standard CliNR scheme is limited to circuits size $s = o(1/(np^2))$.

\begin{theorem} \label{theorem:main}
 Consider the family of $n$-qubit Clifford circuits $C$ of size $s$ with physical error rate $p$ and let  $D = \max\left\{1,\lceil \log(sp)+1\rceil\right\}$.   If $np\rightarrow 0$  there is a recursive $\CliNR$ implementation with a vanishing logical error rate, a  gate overhead  $\gateoverhead \le  24\left\lceil(sp)^4\right\rceil$, and a qubit overhead of $\qubitoverhead = (2D+1)+1/n$. 
\end{theorem}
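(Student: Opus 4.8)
The plan is to exhibit a uniformly bounded Recursive $\CliNR$ implementation (in the sense of \cref{sec:boundedtree}) built from an explicit $\CliNR$ tree, and then verify the three claims in turn. The qubit overhead is immediate: by \cref{lemma:qubit_overhead} a depth-$D$ recursive implementation of an $n$-qubit circuit uses exactly $(2D+1)n+1$ qubits, hence $\qubitoverhead = (2D+1) + 1/n$. So everything reduces to choosing the tree and controlling the logical error rate and the gate count.

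For the tree, I would partition $C$ into $t_D = \lceil s/\bar{s}_D\rceil$ leaf subcircuits of size at most $\bar{s}_D = \lfloor 2/(3p)\rfloor$, set ${\bf r}(v)=R$ for every vertex with $R$ as in \cref{eq:R}, and build the internal levels by repeatedly grouping at most $T$ consecutive siblings under a common parent (so that the ${\bf s}$-values of the children sum to that of the parent), with $T$ of order $1/(np)$ chosen so that $T$ times the per-block output error stays below $2/3$. The routine part is bookkeeping: one checks that depth $D=\max\{1,\lceil\log(sp)+1\rceil\}$ suffices to collapse $t_D$ leaves down to $t_1=\lceil t_D/T^{D-1}\rceil$ top-level vertices, and that, since $T\to\infty$ as $np\to 0$, this forces $t_1=1$ once $sp\ge 1$ (and $t_1=O(1)$ when $sp<1$, where in fact $D=1$). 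The logical error rate then follows from the uniformly bounded bounds: by \cref{lemma:logicalbounded} every nested $\CliNR_{1,R}$ block has an input circuit of logical error rate at most $2/3$; its $R$ random stabilizer measurements suppress the RSP-originated error by $2^{-R}$ (\cref{lemma:checks}), which, balanced by \cref{eq:R} against the $O(Rnp)$ error introduced by the verification gates, makes the block's output logical error rate of order $np$. A union bound over the $t_1$ level-one blocks gives a total logical error rate of order $t_1 np = O(np)\to 0$.

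The gate overhead is the substantive step. I would write the expected gate count of a level-$\ell$ block as $E_\ell = M_\ell\bigl(A_P n + (A_V n + B_V)R + T E_{\ell+1}\bigr) + A_I n$, with $E_D = M_D\bigl(\bar{s}_D + A_P n + (A_V n + B_V)R\bigr) + A_I n$, where $M_\ell = 1/(1-q_\ell)$ and $q_\ell$ is the restart probability at level $\ell$. The crucial claim is that $q_\ell \le 2/3 + o(1)$ \emph{uniformly in $\ell$}, even though a level-one block contains $\Theta(s)$ gates: a restart at level $\ell$ fires only when that block's own $R$ checks detect an anti-commuting error, which happens with probability at most (logical error rate of its input circuit) $+$ (false-positive rate of its $R$ verification gates) $\le 2/3 + o(1)$, since faults deep in the recursion are absorbed by lower-level restarts rather than propagating upward. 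Hence $M_\ell \le 3(1+o(1))$ for all $\ell$. Unrolling the recursion, using $t_1 T^{D-1}\approx t_D$ and $t_D\bar{s}_D\approx s$, and noting that $Rnp\to 0$ and $Rp\to 0$ make the additive $O(Rn)$ terms negligible against $\bar{s}_D\approx 2/(3p)$, one obtains a total expected gate count of $s\,(3+o(1))^D(1+o(1))$, i.e.\ $\gateoverhead \le (3+o(1))^D(1+o(1))$. Finally $D\le\log(sp)+2$ turns this into a bound of the form $\mathrm{const}\cdot(sp)^{\mathrm{const}}$, which one checks lies below $24\lceil(sp)^4\rceil$ in every regime (for $sp\le 1$ we have $D=1$ and the overhead is $O(1)\le 24$).

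The main obstacle is precisely the uniform bound $q_\ell \le 2/3 + o(1)$ on the per-level restart probability: one must argue carefully that the quantity controlling a level-$\ell$ restart is the $\le 2/3$ logical error rate of the already-protected input circuit, not the naive $\Theta(sp)$ probability that \emph{some} gate in that (enormous) subcircuit fails, and then track how the $D$ nested restart loops together with the per-block RSP/RSV/RSI overheads compound without overshooting the stated polynomial bound.
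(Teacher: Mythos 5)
Your proposal follows the paper's proof essentially step by step: construct the uniformly bounded implementation (\cref{def:bounded}), take the qubit count from \cref{lemma:qubit_overhead}, establish the uniform subtree-error bound $q_{\ell,j}\le 2/3$ (the inductive claim inside the proof of \cref{lemma:logicalbounded}), use it to make the per-level restart multiplier a constant, unroll the gate-count recursion of \cref{lemma:clinrtoverhead} into a bound of the form $c^D$, and finally substitute $D\approx\log(sp)$. One genuine difference, in your favor: you exploit $mp\to 0$ (rather than merely $mp<1/2$ as in \cref{lemma:gateoverheadbound}) to get a per-iteration multiplier of $3(1+o(1))$ instead of the paper's $6$, yielding $\gateoverhead\lesssim 3^D\approx(sp)^{\log 3}$ instead of $2\cdot 12^D\approx 2(sp)^{\log 12}$. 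This matters: the paper's intermediate inequality $2\cdot 12^D < 24\lceil(sp)^4\rceil$ is actually problematic for $sp$ just above $1$ (e.g.\ $sp=1.1$ gives $D=2$, $2\cdot 144=288>24\cdot 2=48$), while your tighter exponent clears $24\lceil(sp)^4\rceil$ with room to spare.

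There is, however, a concrete gap in your unrolling. You use $t_1 T^{D-1}\approx t_D$ together with $t_D\bar{s}_D\approx s$, but in the regime of the theorem $T=\Theta(1/(np))\to\infty$ while $D-1=\lceil\log(sp)\rceil$ is set by $sp$ alone, so generically $T^{D-1}\gg t_D\approx\tfrac{3}{2}sp$, hence $t_1=1$ and $t_1T^{D-1}$ overshoots $t_D$ by an unbounded factor. If you literally take $T$ as the branching factor at every level, the unrolled cost is $\sim 3^D T^{D-1}\bar{s}_D$, which is far larger than $3^D s$ and does not give the advertised overhead. The fix is to unroll with the \emph{actual} branching factors $\min(T,t_{\ell+1})$ produced by \cref{algo:bounded}: once the level population reaches $1$ it stays $1$, so the product of the per-level branching factors stays within a bounded factor of $t_D$, and only then does $\gateoverhead\lesssim c^D$ follow. (The last step of the paper's proof of \cref{lemma:gateoverheadbound}, which bounds $12\,t_1(12t)^{D-1}\bar{s}_D$ by $2\cdot 12^D s$ via $t_1\le\lceil s/(\bar{s}_D t^{D-1})\rceil$, glosses over the same case, so you have independently reproduced a soft spot in the paper rather than introduced a new one --- but it is a spot that needs patching in any complete write-up.)
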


Here and throughout this paper, $\log$ denotes the base 2 logarithm and $\gateoverhead$, $\qubitoverhead$ are the $\CliNR$ gate and qubit overheads respectively. 
\begin{proof}
    If $sp\le 1$, then $D=1$ is sufficient and we can use the bounds in \cite{delfosse2024low}[Theorem 2].

    For $sp>1$ we use the uniformly bounded implementation (see \cref{sec:bounded}). We now prove that it produces the desired logical error rate and gate overhead. 
    
    Let $\alpha = \frac{9(4\av +2\bv )+3 \ai  }{2} $.  
    From \cref{lemma:logicalbounded}, a sufficient condition for a vanishing logical error rate at  $np\rightarrow 0$ is that 
  $(\alpha np)^{D-1} < 1/sp$, where we used the fact that $\alpha$ is a constant so $\alpha np \rightarrow 0 $.

    Setting $D= \log(sp)+1$ the condition reduces to $\log(\alpha np)<-1$ which is  satisfied since $\alpha np\rightarrow 0$. 
    
    The bound for the gate overhead from  \cref{lemma:gateoverheadbound} is $\gateoverhead\le 2(12^D) < 24\left\lceil(sp)^4\right\rceil$. 
     The qubit overhead is given in \cref{lemma:qubit_overhead}.
\end{proof}

The condition $np\rightarrow 0$ cannot be avoided with CliNR since the logical error rate is always bounded from below by the logical error rate in RSI which has $\ai n$ noisy gates. On the other hand, the bound on $\gateoverhead$ is loose, {\it e.g.} it can be tightened to $\gateoverhead\le 2$ when $snp^2\rightarrow 0 $ \cite{delfosse2024low}.

\subsection{Notation} \label{sec:notation}

We continue to use the notation from \cref{algorithm:generalized_CliNR_circuit} unless stated otherwise. We use  $r_{\ell,j}={\bf r}(v_{\ell,j})$ and $s_{\ell,j} = {\bf s}(v_{\ell,j})$. Note that we only consider pairs $\tree$, $C$ such that  $s_{0,0}=s$ is the size of $C$ and $r_{0,0}=0$. 

We use $\tree_{\ell,j}$ for a $\CliNR$ tree constructed by taking the subtree of $\tree$ consisting of the vertex $v_{\ell,j}$ and all its children, and setting $r_{0,0}=0$. 

We also use $\ts_{\ell,j}$ to denote the expected number of gates in the implementation of $\CliNR(\tree_{\ell,j},C_{\ell,j})$, accounting for the repetitions due to the restarts in all RSV of the tree. We use $\ts_\ell = \max_j ( \ts_{\ell,j})$. 

The constants $\ap, \,  \av, \, \bv, \, \ai$ are used to bound gate counts (as defined in  \cref{sec:clinr}). We define $m_{\ell,j} = \ap n+r_{\ell,j}(\av n+\bv)$, which we use to upper bound size of a single RSP+RSV iteration. When there is no risk of confusion we use $m$ instead of $m_{\ell, j}$. We also use upper and lower bars to denote the max and min of a variable over all vertices in the same level {\it e.g.} $\bar{r}_{\ell} = \max_j (r_{\ell,j})$ and $\underline{r}_{\ell}= \min_j(r_{\ell,j}) $. 
We use the notation $\gp(x)=1-(1-p)^x$.

We use $\plog$ to denote the logical error rate in a specific context (usually for the  $\CliNR$ implementation). In cases where confusion is possible we use an argument {\it e.g.} $\plog(C) \le \gp(s)$ is the logical error rate in the direct implementation of  $C$. 

For each vertex $v_{\ell,j}$ of $\tree$, the {\em subtree error rate}, denoted $q_{\ell,j}$ is defined to be the logical error rate of the implementation $\CliNR(\tree_{\ell,j},C_{\ell,j})$ of $C_{\ell,j}$.
The logical error rate of the whole recursive implementation of $C$ is equal to $q_{0,0}$.
We use the notation $\bar{p}_{b\ell}$ for an upper bound on $\max_j q_{\ell,j}$.

An easy reference table for notation is provided in \cref{tab:notation} of \cref{appendix:notation}.

\subsection{Bounds on recursive CliNR} \label{sec:technical}

This section provides general bounds on the performance and the overhead of the recursive CliNR scheme, including the statements that lead to \cref{theorem:main}. 

The following known result is used as the motivation for  random stabilizer measurements.

\begin{lemma} \label{lemma:checks} 
Let $\mathcal{S}_r$ be a set of $r$ stabilizers selected independently and uniformly at random from a stabilizer group $\mathcal{S}$. Let $E$ be a Pauli operator that does not commute with $\mathcal{S}$.  The probability that $E$ commutes with $\mathcal{S}_r$ is $2^{-r}$.

\end{lemma}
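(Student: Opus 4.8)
The plan is to reduce to the single-stabilizer case and then use independence. First I would observe that, since the $r$ elements of $\mathcal{S}_r$ are drawn independently and uniformly from $\mathcal{S}$, the event ``$E$ commutes with every element of $\mathcal{S}_r$'' is the intersection of $r$ independent copies of the event ``$E$ commutes with a single uniformly random $g \in \mathcal{S}$''. Hence it suffices to prove that $\Pr_{g \sim \mathcal{S}}[gE = Eg] = \tfrac12$, under the hypothesis that some element of $\mathcal{S}$ anticommutes with $E$; the claimed $2^{-r}$ then follows immediately by multiplying the $r$ factors.

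For the single-stabilizer statement I would exploit the $\mathbb{F}_2$-linear structure of commutation among Pauli operators. Since any two Paulis either commute or anticommute, define $\chi_E(g) \in \{0,1\}$ by $gE = (-1)^{\chi_E(g)} Eg$. The key elementary fact is that $\chi_E$ is additive on $\mathcal{S}$: for $g,h \in \mathcal{S}$, pushing $E$ past $gh$ gives $(gh)E = g(hE) = (-1)^{\chi_E(h)} gEh = (-1)^{\chi_E(h)+\chi_E(g)} E(gh)$, so $\chi_E(gh) = \chi_E(g) + \chi_E(h) \pmod 2$. Thus $\chi_E \colon \mathcal{S} \to \mathbb{Z}/2\mathbb{Z}$ is a group homomorphism.

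Next I would use the hypothesis: ``$E$ does not commute with $\mathcal{S}$'' means $\chi_E$ is not identically $0$, hence it is surjective onto $\mathbb{Z}/2\mathbb{Z}$. Its kernel $\mathcal{S}_0 = \{ g \in \mathcal{S} : \chi_E(g) = 0 \}$ — precisely the stabilizers commuting with $E$ — is therefore a subgroup of index $2$, so $|\mathcal{S}_0| = |\mathcal{S}|/2$ (here I only need that $\mathcal{S}$ is finite, which it is, being a subgroup of the $n$-qubit Pauli group). A uniformly random $g \in \mathcal{S}$ lies in $\mathcal{S}_0$ with probability exactly $\tfrac12$, completing the single-stabilizer case and hence the lemma. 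An equivalent, even more hands-on version of this counting step: fix any $h \in \mathcal{S}$ with $\chi_E(h) = 1$; then $g \mapsto gh$ is a bijection of $\mathcal{S}$ swapping $\mathcal{S}_0$ with its complement, so the two sets have equal size.

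There is no real obstacle here — the argument is elementary. The one point to state carefully is the additivity $\chi_E(gh) = \chi_E(g) + \chi_E(h)$, i.e.\ that the commutation sign is a bilinear/symplectic pairing on Paulis; everything after that is a one-line index-$2$ subgroup argument plus independence of the $r$ draws.
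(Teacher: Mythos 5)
Your proof is correct and follows essentially the same route as the paper's: exactly half of $\mathcal{S}$ commutes with $E$, and independence of the $r$ draws gives $2^{-r}$. The only difference is that the paper asserts the ``exactly half'' fact without justification, whereas you supply a clean proof of it via the observation that $g \mapsto \chi_E(g)$ is a surjective homomorphism $\mathcal{S}\to\mathbb{Z}/2\mathbb{Z}$ whose kernel therefore has index $2$.
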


\begin{proof}
    Exactly half of the stabilizers in the group $\mathcal{S}$ commute with $E$, so each randomly selected stabilizer has a probability of $1/2$ to commute with $E$. The probability that $r$ randomly selected stabilizers all commute with $E$ is therefore $2^{-r}.$
\end{proof}

 Lemma 1 in \cite{delfosse2024low} provides bounds for the performance of $\CliNR_{1,r}$ when the input circuit has a fixed size. The lemma below is a slight extension that explicitly allows the input circuit to have variable size with an expected size $\ts$.  

\begin{lemma}[$\CliNR_1$ bounds] \label{lemma:clinr1}
The implementation $\CliNR_{1,r}(C)$ of an $n$-qubit Clifford circuit $C$ with an expected size $\ts$ has logical error rate 
\begin{align} \label{eq:clinar1_error}
    \plog \le & \frac{\left(1-(1-p)^{\ap n}(1-\plog(C)\right)2^{-r}+2\gp(\av n+\bv)}{(1-p)^m(1-\plog(C))}
    \\  \nonumber & +\gp(\ai n) \cdot
\end{align}

Moreover, the gate overhead satisfies 
\begin{equation}\label{eq:clinr1_gates}
    \gateoverhead \le \frac{(m+\ts)}{\ts(1-p)^m\left(1-\plog(C)\right)} + \frac{\ai n}{\ts} \cdot
\end{equation} 

    It is possible to implement $\CliNR_{1,r}(C)$ using $3n+1$ qubits. 
\end{lemma}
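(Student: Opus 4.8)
The plan is to follow the proof of Lemma~1 of \cite{delfosse2024low}, which already establishes these bounds when the input circuit $C$ has a fixed size $s$; the only genuinely new ingredient is that inside the RSP block we now run an implementation of $C$ whose gate count is a random variable with mean $\ts$, so I would recast the gate-count and repeat-until-success estimates as expectations. Throughout I work block by block: RSP (prepare $n$ Bell pairs with at most $\ap n$ gates, then run an implementation of $C$), RSV ($r$ random stabilizer measurements, all sharing one ancilla), and RSI (consume the resource state, at most $\ai n$ gates).

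For the logical error rate \eqref{eq:clinar1_error} I would bound separately the three largely independent mechanisms that can corrupt the kept output, i.e. the output conditioned on a round passing RSV. First, an undetected logical error created in RSP: a logical error in RSP forces either a fault among the $\ap n$ Bell-pair gates or a logical error of the implementation of $C$ (if the $C$-implementation has no logical error and the Bell pairs are fault-free, the residual error on the resource state lies in $\mathcal{S}(\varphi)$ and is harmless), so its probability is at most $1-(1-p)^{\ap n}\bigl(1-\plog(C)\bigr)$, and such an error passes $r$ random stabilizer measurements with probability $2^{-r}$ by \cref{lemma:checks}. Second, a fault inside the RSV block of the accepting round that either flips an outcome into a false accept or propagates forward; following the fixed-size analysis of \cite{delfosse2024low} this contributes at most $2\gp(\av n+\bv)$. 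Third, a fault in the RSI block, contributing at most $\gp(\ai n)$. The first two terms are then divided by the per-round acceptance probability, which is at least $(1-p)^m\bigl(1-\plog(C)\bigr)$ — no fault among the $m=\ap n+r(\av n+\bv)$ non-$C$ gates of RSP$+$RSV together with no logical error of the $C$-implementation leave the resource state equal to $\ket\varphi$, so every stabilizer outcome is $+1$ — which yields \eqref{eq:clinar1_error}.

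For the gate overhead \eqref{eq:clinr1_gates} I would show that the number $N$ of RSP$+$RSV repetitions is stochastically dominated by a geometric variable: in every round, conditioned on the entire history, the acceptance probability is at least $\pi:=(1-p)^m\bigl(1-\plog(C)\bigr)$ by the estimate above, hence $\mathbb{E}[N]\le 1/\pi$. Each round costs at most the $m$ deterministic gates of RSP$+$RSV plus the gates of one fresh $C$-implementation, whose expected count is $\ts$ and which is independent of the outcomes of earlier rounds; since $N$ is a stopping time for the natural filtration, Wald's identity bounds the expected number of gates before RSI by $\mathbb{E}[N]\,(m+\ts)$. Adding the $\ai n$ gates of RSI and dividing by $\ts$ gives \eqref{eq:clinr1_gates}. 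The qubit count is immediate: $n$ input rails, $2n$ qubits holding the resource state, and a single ancilla reused for all $r$ stabilizer measurements, i.e. $3n+1$ qubits, consistent with \cref{lemma:qubit_overhead} at $D=1$.

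The main obstacle is not any individual estimate but making the variable-size bookkeeping rigorous: I must verify that the per-round acceptance probability is bounded below uniformly over the past (so the geometric domination, and hence $\mathbb{E}[N]\le 1/\pi$, is valid) and that the random size of each successive $C$-implementation is independent of the earlier accept/reject decisions (so Wald's identity genuinely applies to the total gate count). Once these points are settled, the remaining steps are the fixed-size computation of \cite{delfosse2024low} with $s$ replaced by $\ts$.
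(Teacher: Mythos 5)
Your proposal is correct and follows essentially the same route as the paper: you decompose the logical error rate into the three mechanisms (undetected RSP error surviving $r$ random checks via \cref{lemma:checks}, faults in RSV contributing $2\gp(\av n+\bv)$, and RSI faults contributing $\gp(\ai n)$), condition the first two on acceptance, and bound the gate count via a geometric bound on the number of RSP$+$RSV repetitions with acceptance probability at least $(1-p)^m(1-\plog(C))$. The one place you go somewhat beyond the paper is in the gate-overhead bookkeeping: the paper simply writes the geometric sum and plugs in $\ts$ as the per-round cost, whereas you explicitly invoke stochastic domination by a geometric variable and Wald's identity to justify replacing the random per-round size of the inner $C$-implementation by its mean $\ts$. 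That extra care is warranted (it is exactly the new feature relative to Lemma~1 of \cite{delfosse2024low}), but it does not constitute a different proof strategy — it is a rigorous version of the step the paper makes implicitly.
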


\begin{proof}
  The fact that $\CliNR_{1,r}(C)$ implements $C$ and that it can be implemented with $3n+1$ qubits is apparent from its structure. Note that the  same ancilla can be re-used on all stabilizer measurements. 

  For $\gateoverhead$, we need to bound the probability of a restart due to a stabilizer measurement.  A necessary condition is an error before or during the stabilizer measurement. The probability of an error  at any point in RSP and RSV is  therefore an upper bound on the restart probability, so $\pres \le 1-(1-p)^m(1-\plog(C))$. 

  The expected number of times the repeating part of the circuit (RSP and RSV) is executed is upper bounded by
  \begin{align*} 
  &\sum_{k\ge 1}k \pres^{k-1}(1-\pres)  \\
  =&\frac{1}{1-\pres}  \\
  \le & \frac{1}{\left(1-p)^m(1-\plog(C)\right)}. 
  \end{align*}

The total expected gate count  is $\gateoverhead \ts \le \frac{(m+\ts)}{(1-p)^m\left(1-\plog(C)\right)} + \ai n$. 

For a logical error at the output of $\CliNR_1$  at least one of the the following three events must occur: (1) A logical error during RSP which is not detected during RSV (with probability $p_\rsp$); (2) A logical error during a stabilizer measurement which is not detected by a subsequent stabilizer measurement (with probability $p_\rsv$);  (3) A logical error during RSI (with probability $p_\rsi$). 

The  logical error rate is upper bounded by the sum of the probabilities for the events above, conditioned on the circuit not restarting, 
\begin{equation}\label{eq:sump1p2p3}
    \plog \le \frac{p_\rsp+p_\rsv}{1-\pres}+p_\rsi.
\end{equation}

(RSP) A logical error in RSP will not be detected during RSV if it commutes with all the stabilizer measurements. It might also not be detected due to an error during RSV, but that is accounted for in $p_\rsv$.  
The probability of a logical error in  RSP  is  $1-(1-p)^{\ap n}(1-\plog(C))$.  Using  \cref{lemma:checks} we have $p_\rsp\le (1-(1-p)^{\ap n}(1-\plog(C))2^{-r}$

(RSV) The probability of an error during the ith stabilizer measurement is $1-(1-p)^{\av n + \bv}$.  The probability that it is a logical error which is also not detected by one of the latter stabilizer measurements is at most $\left(1-(1-p)^{\av n + \bv}\right)2^{-r+i}$. So 

\[  
p_\rsv \le \sum_{i=1}^r\left(1-(1-p)^{\av n + \bv}\right) 2^{-r+i} \le\\
2\gp(\av n + \bv).
\]

(RSI) At most $\ai n$ operations are required for RSI so  $p_\rsi \le \gp (\ai n)$.

Plugging the bounds into  \cref{eq:sump1p2p3} gives \cref{eq:clinar1_error}
\end{proof}

\begin{lemma}[Logical error bounds for Recursive $\CliNR$ ] \label{lemma:clinrtlogical}
    Let $\tree$ be a $\CliNR$ tree of depth $D$. The implementation   $\CliNR(\tree,C)$  has a logical error rate   
    \begin{align}\label{eq:clinarT_error}
    \plog \le&  t_1 \Bigg(
         \frac{\left(1-(1-p)^{\ap n}(1-\bar{p}_{b1})\right)2^{-\underline{r}_1}+2\gp(\av n + \bv )}{(1-p)^{\bar{m}_1}(1-\bar{p}_{b1})}\\ \nonumber
         & + \gp(\ai n)\Bigg),
    \end{align}

    where 
    \begin{align} \label{eq:pbl}
        \bar{p}_{b\ell} = & \bar{t}_{\ell+1}\Bigl(\\ \nonumber
        &\frac{\left(1-(1-p)^{\ap}(1-\bar{p}_{b{\ell+1}})\right)2^{-\underline{r}_{\ell+1}}+2\gp(\av n +\bv)}{(1-p)^{\bar{m}_{\ell+1}}(1-\bar{p}_{b{\ell+1}})}\\ \nonumber
        &+\gp(\ai n)\Bigr ),
    \end{align}
    for $\ell\le D$, and

    \begin{equation} \label{eq:pbd}
        \bar{p}_{bD} = (1-p)^{\bar{s}_D} \cdot
    \end{equation}

\end{lemma}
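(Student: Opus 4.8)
The plan is to prove the two displayed bounds by a single downward induction on the tree level $\ell$, running from the leaves $\ell = D$ up to the root, carrying the invariant $\max_j q_{\ell,j} \le \bar{p}_{b\ell}$; the bound \cref{eq:clinarT_error} is then the $\ell = 0$ instance, since $\plog = q_{0,0}$ and $r_{0,0} = 0$. The engine of the induction is \cref{lemma:clinr1}, applied once per $\CliNR_1$ block appearing in the construction.

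The first ingredient is a structural observation about \cref{algorithm:generalized_CliNR_circuit}: unfolding the algorithm on the pair $(\tree_{\ell,j}, C_{\ell,j})$ for $\ell \le D-1$ shows that
\[
\CliNR(\tree_{\ell,j}, C_{\ell,j}) \;=\; \prod_{j'} \CliNR_{1, r_{\ell+1,j'}}\!\bigl(\CliNR(\tree_{\ell+1,j'}, C_{\ell+1,j'})\bigr),
\]
where $j'$ ranges over the children of $v_{\ell,j}$ and the product denotes sequential composition; i.e.\ the recursive implementation of a vertex is the concatenation of $\CliNR_1$ blocks, each wrapping the (already recursive) implementation of a child. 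In particular the input circuit of the $j'$-th block is an implementation of $C_{\ell+1,j'}$ with logical error rate exactly $q_{\ell+1,j'}$, and it is Clifford, so \cref{lemma:clinr1} applies to that block with $\plog(C) = q_{\ell+1,j'}$. The second ingredient is a block-wise union bound: the logical error rate of a sequential composition of blocks is at most the sum of the per-block logical error rates, because faults in distinct blocks are independent and, if no block produces a non-trivial output error, then propagating the trivial errors forward through the remaining (Clifford) blocks leaves the overall output error trivial. Together these give $q_{\ell,j} \le \sum_{j'}\plog\bigl(\CliNR_{1,r_{\ell+1,j'}}(\cdots)\bigr) \le \bar{t}_{\ell+1}\,\max_{j'}\plog\bigl(\CliNR_{1,r_{\ell+1,j'}}(\cdots)\bigr)$, where $\bar{t}_{\ell+1}$ bounds the number of children of a level-$\ell$ vertex.

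For the base case $\ell = D$, the subtree $\tree_{D,j}$ is a single vertex, so $\CliNR(\tree_{D,j}, C_{D,j})$ is the direct implementation of $C_{D,j}$ and $q_{D,j} = \plog(C_{D,j}) \le \gp(s_{D,j}) \le \gp(\bar{s}_D)$, which is the base value $\bar{p}_{bD}$. For the inductive step, assume $\max_{j'} q_{\ell+1,j'} \le \bar{p}_{b{\ell+1}}$; substitute $\plog(C) = q_{\ell+1,j'}$, $r = r_{\ell+1,j'}$ and $m = m_{\ell+1,j'}$ into \cref{eq:clinar1_error}, then over-estimate by replacing $q_{\ell+1,j'}$ with $\bar{p}_{b{\ell+1}}$, $r_{\ell+1,j'}$ with $\underline{r}_{\ell+1}$ inside the $2^{-r}$ factor, and $m_{\ell+1,j'}$ with $\bar{m}_{\ell+1}$ inside the $(1-p)^{-m}$ factor; multiplying the result by $\bar{t}_{\ell+1}$ reproduces the bracketed expression in \cref{eq:pbl}, closing the induction. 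Specializing the same computation to $\ell = 0$, with $t_1$ children and $\bar{p}_{b1}$ in place of $\bar{p}_{b{\ell+1}}$, yields \cref{eq:clinarT_error}.

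The step I expect to need the most care is verifying that all these over-estimations point the same way simultaneously. As a function of $x = \plog(C) \in [0,1)$ the right-hand side of \cref{eq:clinar1_error} is increasing — its numerator increases and the $(1-x)$ in the denominator shrinks — so bounding $x$ by $\bar{p}_{b{\ell+1}}$ is legitimate; and the two occurrences of $r$ must be treated separately, using $\underline{r}_{\ell+1}$ where a larger $2^{-r}$ enlarges the bound and $\bar{m}_{\ell+1}$ (equivalently $\bar{r}_{\ell+1}$) where a larger $(1-p)^{-m}$ enlarges it, so that the resulting estimate stays valid even though it is not tight. Two smaller points worth recording: \cref{eq:clinar1_error} does not involve the expected input size $\ts$ (only the gate-overhead bound does), so the internal restarts of a nested implementation do not affect this logical-error estimate; and reusing the same $2n+1$ ancilla qubits across sibling blocks introduces no correlation between their faults, since each block re-prepares its resource state from scratch.
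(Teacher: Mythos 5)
Your proof is correct and follows essentially the same route as the paper's: both are inductions that apply \cref{lemma:clinr1} to each $\CliNR_1$ block and sum the resulting per-block logical error rates across a level, the only difference being that you run the induction level-by-level from the leaves of a fixed tree (carrying the invariant $\max_j q_{\ell,j}\le\bar{p}_{b\ell}$) while the paper frames it as induction on the tree depth $D$ starting at $D=1$. One remark you make implicitly is worth surfacing: your base value $\bar{p}_{bD}=\gp(\bar{s}_D)$ silently corrects a typo in \cref{eq:pbd}, which reads $(1-p)^{\bar{s}_D}$ but should be $1-(1-p)^{\bar{s}_D}$ — the version that is in fact used in the proof of \cref{lemma:logicalbounded}.
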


\begin{proof}
We do the proof by induction starting with $D=1$ as the base case.
At $D=1$ we have a sequence of ${t_1}$ $\CliNR_{1,r_j}(C_{1,j})$. The upper bound of \cref{eq:clinarT_error} is a consequence of adding the logical error rates from  $t_1$ implementations, $\CliNR_{1,r_i}(C_{1,j})$,  and using the bound in \cref{lemma:clinr1}. 

Assume that \cref{eq:clinarT_error} holds for $D=d$ and let us prove this bound for $D=d+1$.
We start by constructing a $\CliNR$ tree $\tree^-$, starting from $\tree$ and removing all vertices at level $D$. By hypothesis,  \cref{eq:clinarT_error} holds for $\CliNR(\tree^-,C)$.  The circuit $\CliNR(\tree,C)$ is constructed (see \cref{algorithm:generalized_CliNR_circuit}) by first  constructing the circuit $\CliNR(\tree^-,C)$ and then replacing each subcircuit $C_{d,j}$ with a sequence of at most $\bar{t}_{d}$ $\CliNR_1$ circuits. Each of these have a logical error rate bounded by $\bar{p}_{bd}/\bar{t}_{d+1}$ (from  \cref{lemma:clinr1}), leading to  \cref{eq:clinarT_error}. 

\end{proof}

\begin{lemma}[Gate overhead for $\CliNR$ tree] \label{lemma:clinrtoverhead}
     Let $\tree$ be a $\CliNR$ tree of depth $D$. The gate overhead for the implementation $\CliNR(\tree,C)$ satisfies 
      \begin{equation} \label{eq:clinrT_overhead} 
        \gateoverhead  \le \frac{\bar{t}_1}{s} \left (\frac{(\bar{m}_1+\ts_1)}
        {(1-p)^{\bar{m}_1}(1-\bar{p}_{b1})} + \ai n \right), 
    \end{equation}

     where $\bar{p}_{b1}$ is given by the recursive expressions in \cref{eq:pbl} and \cref{eq:pbd},  and $\ts_1$ is given by the recursive expression 
      
    \begin{equation} \label{eq:sbl}
     \ts_\ell \le \bar{t}_{\ell+1}\left(\frac{\bar{m}_{\ell+1}+{\ts}_{\ell+1}}{(1-p)^{\bar{m}_{\ell+1}}(1-\bar{p}_{b\ell+1})} + \ai n\right)
    \end{equation}
    for $\ell < D$,  and    
    $\ts_D$ is  the size of the largest subcircuit at depth $D$.

\end{lemma}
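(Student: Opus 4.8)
The plan is to prove the two assertions of the lemma --- the recursion \cref{eq:sbl} for the expected sizes $\ts_\ell$ and the overhead bound \cref{eq:clinrT_overhead} --- together, by a single reverse induction on the level $\ell$, from $\ell=D$ down to $\ell=0$. The only analytic ingredient is the expected-size estimate of \cref{lemma:clinr1}: a block $\CliNR_{1,r}(C')$ whose input circuit $C'$ has expected gate count $\ts'$ and logical error rate $\plog(C')$ has expected total gate count at most $(m+\ts')/\big((1-p)^m(1-\plog(C'))\big)+\ai n$, with $m=\ap n+r(\av n+\bv)$; to control the denominators at deeper levels I would invoke the bounds $q_{\ell,j}\le\bar{p}_{b\ell}$ already supplied by \cref{lemma:clinrtlogical}. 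First, though, I would record the structural fact that makes the recursion local: since \cref{algorithm:generalized_CliNR_circuit} never uses $\bf r$ at the root, unrolling it on the subtree $\tree_{\ell,j}$ shows that $\CliNR(\tree_{\ell,j},C_{\ell,j})$ is exactly the concatenation, over the children $v_{\ell+1,k}$ of $v_{\ell,j}$, of blocks $\CliNR_{1,r_{\ell+1,k}}$ whose input circuits are themselves the recursive implementations $\CliNR(\tree_{\ell+1,k},C_{\ell+1,k})$, internal restarts included. In particular, since $r_{0,0}=0$, we have $\CliNR(\tree,C)=\CliNR(\tree_{0,0},C_{0,0})$, so its expected gate count is $\ts_{0,0}$ and its gate overhead is $\ts_{0,0}/s$.

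For the base case $\ell=D$, the subtree $\tree_{D,j}$ is a single leaf, so $\CliNR(\tree_{D,j},C_{D,j})$ is the direct implementation of $C_{D,j}$ and $\ts_{D,j}=s_{D,j}\le\bar{s}_D$; hence $\ts_D$ is the size of the largest depth-$D$ subcircuit, as claimed. For the inductive step I would fix a vertex $v_{\ell,j}$ with $\ell<D$: by the structural observation and linearity of expectation, $\ts_{\ell,j}$ is the sum over the (at most $\bar{t}_{\ell+1}$) children $k$ of the expected gate count of a $\CliNR_1$ block performing $r_{\ell+1,k}$ stabilizer measurements on an input circuit of expected size $\ts_{\ell+1,k}\le\ts_{\ell+1}$ and logical error rate $q_{\ell+1,k}\le\bar{p}_{b\ell+1}$. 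Applying the \cref{lemma:clinr1} estimate to each summand, bounding $m_{\ell+1,k}\le\bar{m}_{\ell+1}$, and summing gives \cref{eq:sbl} for $\ts_\ell=\max_j\ts_{\ell,j}$. Running the same one-step estimate once more at $\ell=0$ yields $\ts_{0,0}\le\bar{t}_1\big((\bar{m}_1+\ts_1)/((1-p)^{\bar{m}_1}(1-\bar{p}_{b1}))+\ai n\big)$, and dividing by $s$ gives \cref{eq:clinrT_overhead}.

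The part that needs the most care is the restart bookkeeping hidden inside the per-block estimate: when an outer $\CliNR_1$ block restarts, its whole input circuit --- itself variable-size because of restarts at deeper levels --- is re-executed, so one RSP+RSV attempt costs $\bar{m}_{\ell+1}+\ts_{\ell+1}$ in expectation and the expected number of attempts is at most $1/\big((1-p)^{\bar{m}_{\ell+1}}(1-\bar{p}_{b\ell+1})\big)$. This compounding across levels is precisely what \cref{lemma:clinr1} already packages, so once the unrolling of \cref{algorithm:generalized_CliNR_circuit} is in place the remaining work is routine arithmetic. The only other thing I would flag explicitly is that $\bar{t}_{\ell+1},\bar{m}_{\ell+1},\ts_{\ell+1}$ and $\bar{p}_{b\ell+1}$ are maxima over the relevant vertices, so replacing the child-specific quantities by them can only weaken the inequalities.
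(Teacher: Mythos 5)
Your proof is correct and uses the same essential machinery as the paper: the per-block expected gate count bound from \cref{lemma:clinr1}, applied at each $\CliNR_1$ block and propagated through the tree, with $\bar{p}_{b\ell}$ from \cref{lemma:clinrtlogical} controlling the denominators. The only difference is bookkeeping: the paper runs a forward induction on the tree depth $D$ (passing from $\tree^-$ of depth $d$ to $\tree$ of depth $d{+}1$ by reattaching the bottom layer), whereas you run a reverse induction on the level $\ell$ within a fixed tree, establishing \cref{eq:sbl} from the leaves upward and dividing by $s$ at the root; these are the same recursion written in different order, and your explicit structural observation that $\CliNR(\tree_{\ell,j},C_{\ell,j})$ factors as a concatenation of $\CliNR_{1,r_{\ell+1,k}}$ blocks with input circuits $\CliNR(\tree_{\ell+1,k},C_{\ell+1,k})$ is a clean way of justifying the step the paper leaves implicit.
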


\begin{proof}
    We prove this by induction.  For $D=1$ (base case) we have a sequence of $\bar{t}_1$ implementations of $\CliNR_{1,r_{1,j}}(C_j)$ each with an upper bound on the expected gate number given by \cref{lemma:clinr1}, leading to  \cref{eq:clinrT_overhead}.

    The induction hypothesis is that \cref{eq:clinrT_overhead} holds for $D=d$.  For $D=d+1$ we begin by constructing a $\CliNR$ tree $\tree^-$ starting from $\tree$ and removing all vertices at level $D$. By hypothesis,  \cref{eq:clinrT_overhead} holds for $\CliNR(\tree^-,C)$. The $\CliNR(\tree,C)$ circuit is constructed by first constructing $\CliNR(\tree^-,C)$ and then replacing each subcircuit $C_{d,j}$ with a sequence of at most $\bar{t}_{d+1}$ $\CliNR_1$ circuits (see \cref{algorithm:generalized_CliNR_circuit}). Each of these $\CliNR_1(C_{D,j})$ circuits has an expected gate count with an upper bound given in  \cref{lemma:clinr1}.  Multiplying the largest of these by $\bar{t}_{d+1}$ gives \cref{eq:clinrT_overhead}. 
\end{proof}

\subsection{Uniformly bounded implementation} \label{sec:bounded}

To prove \cref{theorem:main}, we introduce a new class of recursive $\CliNR$ implementations which we call  \emph{uniformly bounded} (see \cref{fig:boundedT} b). Before going to a formal definition, we provide some intuition by listing a few properties which are valid at the limit $np \rightarrow 0$ (see  proof of \cref{lemma:logicalbounded} and  \cref{theorem:main}). 
\begin{enumerate}

    \item For each implementation of $\CliNR_1$, the logical error rate in $RSP$ is similar to the logical error rate in $RSV$, \ie the two terms in the numerator of \cref{eq:pbl}  have a similar value (of order $np$). 

\item There is a bounded subtree error $q_{\ell,r}<2/3$ for all $\ell \ge 1$.

\item The logical error rate $ q_{0,0} < O(sn^Dp^{D+1})$. 

    \item The logical error rate $q_{0,0}$ vanishes when $D=\max\left(1, \log(sp)\right)$.
\end{enumerate}

Property 1 means that the logical error at the output of each $\CliNR_1$ block is independent of $s$. Intuition on  property 2  and how it leads to 3 given in \cref{fig:boundedT} (b). Property 4 follows from 3 (\cref{theorem:main}).

\begin{definition} [Uniformly bounded $\CliNR$] 
\label{def:bounded} 
A Uniformly bounded $\CliNR$ implementation with depth $D$, is a recursive $\CliNR$ implementation using a $\CliNR$ tree constructed using \cref{algo:bounded}.

For the purposes of the algorithm we define the following functions
 \begin{equation}\label{eq:T}
    T(p,n) := \left \lfloor\frac{2}{9(4\av n +2\bv)p+3 \ai n p} \right \rfloor
    \end{equation}
and 
\begin{equation}\label{eq:R}
R(p,n) :=
    \lceil \log\left(p\ap n\left(1-\frac{2}{3}\right)+\frac{2}{3}\right)-\log(2 \av n p))\rceil.
 \end{equation}

We say that $\tree({\bf s}, {\bf r})$ is empty if it has no vertices and the functions ${\bf s}$ and ${\bf r}$ map all vertices $v_{\ell,j}$ to 0.

\begin{algorithm}
\caption{Algorithm for constructing a $\CliNR$ tree for a uniformly bounded implementation}
\label{algo:bounded}
\DontPrintSemicolon
\SetKwInOut{Input}{input}
\SetKwInOut{Output}{output} 
    \Input{A $n$-qubit Clifford circuit $C$ with size $s$. 
    A noise parameter $p$.}
    \Output{A $\CliNR$ tree $\tree$.}

    Let $R=R(p,n)$, and $T=T(p,n)$.

Create an empty tree $\tree = \tree({\bf s}, {\bf r})$. All vertices added below are for $\tree$. 

Set $t'=\lceil \frac{3sp}{2}\rceil$.

Add vertices $v_{D,j}$, $j\in \{0,\dots, t'-1\}$.

\For {$j \in \{0,\dots ,(s \mod t')-1\}$}{ 
    set  ${\bf s}(v_{D,j}) = \lceil \frac{s}{t'} \rceil$.
    
    }
\For {$j \in \{s \mod t' ,\dots, t'-1\}$}{ 
    set  ${\bf s}(v_{D,j}) = \lfloor \frac{s}{t'} \rfloor$.
    
    }

\For {  $ \ell \in \{D-1,\dots, 1\}$ }{
Set $k=0$.

\While{there are vertices in layer $\ell+1$ with no parent in layer $\ell$}{

Let $V$ be the set of first $T$ vertices in layer $\ell+1$ with no parent.

Add vertex $v_{\ell,k}$.

Make all vertices in $V$ children of $v_{\ell,k}$.

Set ${\bf s}(v_{\ell,k})=\sum_{v\in V}{\bf s}(v)$.

Set $k=k+1$.
}}

For all vertices $v_{\ell,j} \in \tree$, set ${\bf r }(v_{\ell,j})=R$. 

Add vertex $v_{0,0}$ and set ${\bf s}(v_{0,0})=s$, and ${\bf r}(v_{0,0})=0$.

Make all vertices at level one children of $v_{0,0}$.
    
    \Return{$\tree$. }
\end{algorithm}

\end{definition}

The following properties are a direct consequence of \cref{algo:bounded}.

\begin{proposition}

The uniformly bounded implementation of a Clifford circuit with $s>1/p$ has the following properties:
\begin{itemize}
    \item $C_{D,j}$ are of size at most $\bar{s}_D = \frac{2}{3p} $ and at least $ \frac{1}{2p}$.

    \item The $\CliNR$ tree has $\lceil\frac{s}{\bar{s}_D}\rceil$ vertices at level $D$.

    \item Each vertex $v_{\ell,j}$ with $D>\ell>0$ has  at most $t=T(p,n)$  children. 
    
      \item  $r_{\ell,j}=R(p,n)$ for all $\ell\ne 0 ,j$.

   \end{itemize}

\end{proposition}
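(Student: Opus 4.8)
The plan is to verify each of the four items by directly tracing through \cref{algo:bounded}: the proposition is essentially a read-off of the tree that the algorithm produces, so the work reduces to substituting the definitions of $t' = \lceil 3sp/2\rceil$, $T(p,n)$ and $R(p,n)$ into the relevant lines. I would treat the size bounds first, since they are the only item needing real care, and then dispatch the other three in a sentence each.

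For the size bounds, the two size-assignment loops in \cref{algo:bounded} give every level-$D$ vertex $v_{D,j}$ a value ${\bf s}(v_{D,j}) \in \{\lfloor s/t'\rfloor, \lceil s/t'\rceil\}$ with $t' = \lceil 3sp/2\rceil$, so it suffices to sandwich $s/t'$. The upper direction is clean: $t' \ge 3sp/2$ gives $s/t' \le 2/(3p)$, hence $\lfloor s/t'\rfloor \le 2/(3p)$. For the lower direction I would use $t' < 3sp/2 + 1$ together with the hypothesis $s > 1/p$ (so $2/s < 2p$) to get $s/t' > \frac{2s}{3sp+2} > \frac{2}{5p}$, and then observe that for physically relevant $p$ (say $p \le 1/6$) this value, being of order $1/p$, stays above $1/(2p)$ even after rounding down. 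I expect the cleanly provable version of this item to read ``$C_{D,j}$ has size $\lceil s/t'\rceil$ or $\lfloor s/t'\rfloor$ with $\frac{2}{5p} < s/t' \le \frac{2}{3p}$,'' with the stated window $[1/(2p),\,2/(3p)]$ holding up to the integer rounding of $s/t'$; pinning that rounding down (it can be off by a fraction at the boundary $sp \approx 1$) is the main, and essentially the only, obstacle.

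The remaining three items follow from single lines of the algorithm. The level-$D$ vertices are exactly $v_{D,0},\dots,v_{D,t'-1}$, and no subsequent step touches that level, so their count is $t' = \lceil 3sp/2\rceil = \lceil s/\bar{s}_D\rceil$, using $s/\bar{s}_D = s\cdot\frac{3p}{2} = \frac{3sp}{2}$. For the child counts, the \emph{while}-loop nested inside the loop over $\ell = D-1,\dots,1$ repeatedly takes at most the first $T = T(p,n)$ still-unparented vertices of layer $\ell+1$ and makes them the children of a freshly created $v_{\ell,k}$; every vertex $v_{\ell,j}$ with $0 < \ell < D$ is created in exactly this way, so it has at most $T(p,n)$ children (and at least one). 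Finally, the penultimate line sets ${\bf r}(v_{\ell,j}) = R = R(p,n)$ for every vertex then present in $\tree$, and only afterwards is the root $v_{0,0}$ appended with ${\bf r}(v_{0,0}) = 0$; hence $r_{\ell,j} = R(p,n)$ for all $(\ell,j)\ne(0,0)$, as claimed.
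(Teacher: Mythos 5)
The paper offers no proof of this proposition --- it is announced as ``a direct consequence of \cref{algo:bounded}'' --- so your task was to supply one, and your overall approach (read off the tree produced by the algorithm, item by item) is the right and essentially the only one. Items~2--4 of your argument are correct and complete: the $t'=\lceil 3sp/2\rceil$ level-$D$ vertices are never modified later, the while-loop grabs batches of at most $T(p,n)$ orphans, and the $r$-assignment line runs before $v_{0,0}$ is appended.

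On item~1, however, your diagnosis of what goes wrong is off. You derive $s/t' > 2/(5p)$ from $s>1/p$, which is correct, but you then assert that for $p\le 1/6$ the value ``stays above $1/(2p)$ even after rounding down''; since $2/5 < 1/2$, your own bound already sits strictly below $1/(2p)$, so rounding is not the obstacle. The real gap is that the hypothesis $s>1/p$ is simply too weak for the stated lower bound. Concretely, take $p=10^{-2}$ and $s=134>1/p$: then $3sp/2 = 2.01$, $t'=3$, and $\lfloor s/t'\rfloor=44 < 50 = 1/(2p)$. To obtain $s/t' \ge 1/(2p)$ (before any rounding) one needs $3p + 2/s \le 4p$, i.e.\ $s\ge 2/p$; under $s>1/p$ alone the sharp lower bound is your $2/(5p)$. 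Integer rounding is a second, separate nuisance: for the vertices of size $\lceil s/t'\rceil$ the clean upper bound is $s/t'+1 \le 2/(3p)+1$, not $2/(3p)$. A tidy, provable version of the item is therefore ``sizes lie in $\{\lfloor s/t'\rfloor,\lceil s/t'\rceil\}$ with $2/(5p) < s/t' \le 2/(3p)$,'' and the constant $1/2$ in the paper's lower bound should either be weakened to $2/5$ or the hypothesis strengthened to $s\ge 2/p$ --- either fix is harmless for the downstream Lemma~6, which only uses the lower bound up to a constant factor.
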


\begin{lemma} [Logical error rate  for a uniformly bounded  implementation] \label{lemma:logicalbounded}

If  $\ap n+ 2R(p,n)(\av n +\bv) +\ai n  < 1/(2p)$ then  a uniformly bounded $\CliNR$ implementation with depth $D$ has a logical error rate

\begin{equation} \label{eq:logicalerrorboundedtree}
\plog \le sp^{D+1}
\left( \frac{9(4\av n +2\bv )+3 \ai n }{2}\right)^{D}.
\end{equation}
\end{lemma}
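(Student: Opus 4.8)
The plan is to chain together the recursive bounds from \cref{lemma:clinrtlogical} with the explicit parameter choices made in \cref{algo:bounded}. First I would record the two structural facts guaranteed by the algorithm: every non-root vertex has $r_{\ell,j}=R := R(p,n)$ and every internal vertex has at most $T := T(p,n)$ children, while the bottom subcircuits satisfy $\bar s_D \le \frac{2}{3p}$, so from \cref{eq:pbd} the base of the recursion obeys $\bar p_{bD} = (1-p)^{\bar s_D} \ge 1 - p\bar s_D \ge 1/3$. Wait---that is a \emph{lower} bound; what is actually needed is that $1-\bar p_{bD}$ is controlled, and indeed $1-\bar p_{bD} = 1-(1-p)^{\bar s_D} \le p\bar s_D \le 2/3$. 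The key quantitative claim I would isolate as an intermediate step is: under the hypothesis $\ap n + 2R(\av n+\bv) + \ai n < 1/(2p)$, one has $\bar p_{b\ell} \le 2/3$ for every level $\ell$, and moreover the per-level increment is of order $np$. Concretely I would show by downward induction on $\ell$ (from $D$ to $1$) that
\begin{equation*}
\bar p_{b\ell} \le \alpha n p, \qquad \alpha := \frac{9(4\av n + 2\bv) + 3\ai n}{2n},
\end{equation*}
using \cref{eq:pbl}; here the factor-of-$T$ blow-up at each level is exactly cancelled by the definition \cref{eq:T} of $T$, which is engineered so that $T\cdot(\text{one }\CliNR_1\text{ error}) \le 2/3$.

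The heart of the induction is plugging the parameter choices into \cref{eq:pbl}. The numerator has two terms: the RSP term $\bigl(1-(1-p)^{\ap n}(1-\bar p_{b,\ell+1})\bigr)2^{-R}$ and the RSV term $2\gp(\av n+\bv) \le 2p(\av n+\bv)$. The choice of $R$ in \cref{eq:R} is precisely the one that makes $2^{-R}$ small enough that the RSP term is bounded by (a constant times) $2p\av n$ as well---this is the content of property~1 in the informal list, that the two numerator terms are balanced and both $O(np)$. So the numerator is $\le$ (something like) $2p(4\av n + 2\bv)$ after collecting constants. The denominator $(1-p)^{\bar m_{\ell+1}}(1-\bar p_{b,\ell+1})$ is bounded below by a constant: $(1-p)^{\bar m} \ge 1 - p\bar m \ge 1/2$ by the hypothesis (since $\bar m_{\ell+1} = \ap n + R(\av n+\bv) \le \ap n + 2R(\av n+\bv) + \ai n < 1/(2p)$), and $1-\bar p_{b,\ell+1} \ge 1/3$ by the induction hypothesis $\bar p_{b,\ell+1}\le 2/3$. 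Dividing, adding $\gp(\ai n)\le p\ai n$, and multiplying by $\bar t_{\ell+1}\le T \le \frac{2}{9(4\av n+2\bv)p + 3\ai n p}$ yields $\bar p_{b\ell} \le 2/3$ (closing the induction) and in fact $\bar p_{b\ell} \le \alpha n p$ after tracking constants; the arithmetic is arranged so the $T$ in the denominator of \cref{eq:T} eats the parenthesized sum.

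Having established $\bar p_{b1} \le \alpha np$, I would feed this into the top-level bound \cref{eq:clinarT_error} one more time. The top level has $t_1$ vertices, and since each level divides the circuit into chunks a factor of $T$ larger, $t_1 \le \lceil s/(\bar s_D T^{D-1})\rceil \le \max(1, \tfrac{3sp}{2} T^{-(D-1)})$; combining with the same per-$\CliNR_1$ bound of order $\alpha np / T$ (because at level $1$ the relevant error is $\bar p_{b1}/t_1$-type, divided appropriately), the $t_1$ and the $T^{-(D-1)}$ and the $D$ copies of $\alpha np$ multiply out to give exactly $sp^{D+1}\bigl(\tfrac{9(4\av n+2\bv)+3\ai n}{2}\bigr)^D$. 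More cleanly: the recursion \cref{eq:pbl} is of the form $\bar p_{b\ell} \le T\cdot\beta\cdot(\text{stuff})$ with $T\beta \le \tfrac{2}{3}$ times the "naked" increment, and iterating $D$ times from the base $\bar p_{bD} = (1-p)^{\bar s_D}$---together with $t_1 \bar s_D \prod T \sim s$---collapses to the stated closed form. I expect the main obstacle to be the bookkeeping in the $R$-estimate: verifying that \cref{eq:R} really does make the RSP numerator term comparable to $2p\av n$ rather than merely bounded, since getting the clean constant $9(4\av n + 2\bv) + 3\ai n$ (as opposed to a looser constant) requires the balancing to be tight, not just an inequality. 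A secondary subtlety is handling the ceilings/floors in $t'=\lceil 3sp/2\rceil$ and in $T$ so that the telescoping product of chunk counts is genuinely $\le s$ and not $s$ plus lower-order slack; I would absorb these into the $\lceil\cdot\rceil$ already present in the statement of \cref{theorem:main} rather than in this lemma, noting that the hypothesis $s > 1/p$ makes the floors/ceilings harmless.
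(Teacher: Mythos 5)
Your overall plan matches the paper's: a downward induction on $\ell$ that exploits the definitions of $R$ and $T$ to keep the subtree error rates under control, followed by a telescoping count of vertices to get the closed form. Several of the sub-observations are also right on the mark --- that $R$ is chosen so the RSP term in the numerator of \cref{eq:pbl} is of the same $O(np)$ order as the RSV term, that the hypothesis $\ap n + 2R(\av n + \bv) + \ai n < 1/(2p)$ is there precisely so $(1-p)^{\bar m} \ge 1/2$, and that the ceilings in $t'$ are harmless once $s > 1/p$.

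However, there is a genuine gap in the central inductive claim. You propose to prove $\bar p_{b\ell} \le \alpha n p$ with $\alpha = \frac{9(4\av n + 2\bv)+3\ai n}{2n}$, so that $\alpha n p = 1/T$. This cannot hold. The base case already fails: the leaves are direct implementations of subcircuits of size close to $\bar s_D \approx 2/(3p)$, so $\bar p_{bD} \le p\bar s_D \approx 2/3$, a fixed constant, not something of order $np$. (Your reading of \cref{eq:pbd} as a lower bound and trying to control $1-\bar p_{bD}$ is inverted --- $\bar p_{bD}$ is an error rate and should be small; \cref{eq:pbd} as printed in the paper is a typo for $1-(1-p)^{\bar s_D}$, consistent with the line in the paper's proof ``$p_{bD} < 2/3$ is ensured from $\bar s_D p \le 2/3$''.) The inductive step also cannot deliver $O(np)$: each level multiplies a per-$\CliNR_1$ error of order $np$ by $\bar t_{\ell+1} \le T = O(1/(np))$, so $\bar p_{b\ell} = \bar t_{\ell+1}\cdot O(np) = O(1)$ and it \emph{stays} $O(1)$ at every level. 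The correct invariant --- and the one the paper establishes --- is simply $\bar p_{b\ell} \le 2/3$ for all $\ell \ge 1$. All of the $p^D$ in the final bound comes from the vertex-count telescoping $t_1 \le s/(\bar s_D T^{D-1})$ combined with the single factor $1/T$ from the per-block error at the top level, i.e.\ $\plog \le t_1\cdot\frac{2}{3T} \le \frac{2s}{3\bar s_D T^D} \le s p^{D+1}\bigl(\frac{9(4\av n + 2\bv)+3\ai n}{2}\bigr)^D$; it does not come from ``$D$ copies of $\alpha np$''. As a consequence, your proposed final collapse --- multiplying $t_1$, $T^{-(D-1)}$, \emph{and} $(\alpha np)^D$ --- would overcount powers of $T^{-1}$ by roughly a factor of $T^{2D-2}$ and would not reproduce \cref{eq:logicalerrorboundedtree}. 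Replacing the inductive claim with $\bar p_{b\ell} \le 2/3$ and deriving the final bound from $t_1/T^D$ alone fixes the argument and realigns it with the paper's.
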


 \begin{proof}
For a uniformly bounded implementation, the number of children for  $v_{0,0}$ is $t_1 \le  \left\lceil \frac{s}{\bar{s}_D t^{D-1}}\right \rceil$.  The value of all $m_{\ell,j}$ is fixed and denoted $m$.

Using \cref{lemma:clinrtlogical}, our aim is to prove that $\bar{p}_{b\ell}$ in  \cref{eq:pbl} remains bounded for all $\ell>1$. We do this by first assuming that $\bar{p}_{b\ell+1} \le 2/3$ and then proving that this assumption is true for all $\ell$.

Using  the Bernoulli bound $(1-p)^x \ge (1-xp)$ in \cref{eq:pbl} gives,  
 \begin{align} 
        p_{b\ell} \le  & \bar{t}_{\ell+1}\Bigg(\\ \nonumber
        &\frac{(\ap n p(1-\bar{p}_{b{\ell+1}})+ \bar{p}_{b{\ell+1}})2^{-{r}}+2(\av n +\bv)p}{(1-mp)(1-\bar{p}_{b{\ell+1}})}\\ \nonumber
        &+\ai n p\bigg). 
    \end{align}
Since $ r=\lceil \log(p\ap n(1-2/3)+2/3)-\log(2 \av n p))\rceil$  and assuming  $\bar{p}_{b{\ell+1}} \le 2/3$ gives  
\[ p_{b\ell} \le \bar{t}_{\ell+1}  \left( \frac{(4\av n +2 \bv)p}{{(1-mp)(1-\bar{p}_{b{\ell+1}})}} +\ai n p \right).  \]

By assumption $1-mp >1/2 $, and by construction $\bar{t}_{\ell+1}< \frac{2}{9(4\av n + 2\bv)p+3 \ai n p}$ so if $\bar{p}_{b{\ell+1}} \le 2/3$ we have  $p_{b\ell} \le  2/3 $. Since $p_{bD}<2/3$ is ensured from  $\bar{s}_D p \le 2/3 $ then our assumption $p_{b\ell} \le 2/3$   is guaranteed to be true for all $\ell\ge 1$. 

 The same can be done for \cref{eq:clinarT_error} replacing $t$ with $t_1$ so  $\plog \le \frac{2}{3}\frac{t_1}{t} \le sp \left (\frac{9(4\av n +2\bv)p+3 \ai n p}{2}\right )^D$. 
\end{proof}

Note that at $D=1$ we recover the $snp^2$ scaling  obtained in \cite{delfosse2024low}.
When $D > 1$, we get $\plog \leq O(sn^D p^{D+1})$.

\begin{lemma} [Gate overhead for  for a uniformly bounded  implementation] \label{lemma:gateoverheadbound}

Under the same conditions as  \cref{lemma:logicalbounded}, $\gateoverhead$  for $\CliNR(\tree^{B,D}, C)$ satisfies 
\[ \gateoverhead  \le 2(12^{D})  .
\]
\end{lemma}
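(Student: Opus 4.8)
The plan is to bound the gate overhead by unrolling the recursive expression for $\ts_1$ in \cref{eq:sbl} from \cref{lemma:clinrtoverhead}, using the fact that in a uniformly bounded implementation all the quantities that appear ($m$, $r$, $\bar t_{\ell+1}$, $\bar p_{b\ell+1}$) are uniform across levels and already controlled. Concretely, from \cref{lemma:clinrtoverhead} we have
\begin{equation*}
\gateoverhead \le \frac{\bar t_1}{s}\left(\frac{\bar m_1 + \ts_1}{(1-p)^{\bar m_1}(1-\bar p_{b1})} + \ai n\right),
\end{equation*}
so it suffices to get a good bound on $\ts_1$, and then observe that $\bar t_1/s$ is essentially $1/(\bar s_D t^{D-1})$ so the prefactor is small.

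First I would control the per-level blow-up factor in \cref{eq:sbl}. Under the hypothesis of \cref{lemma:logicalbounded} we have $mp < 1/2$, hence $(1-p)^{\bar m_{\ell+1}} \ge 1 - mp > 1/2$, and from the proof of \cref{lemma:logicalbounded} we have $\bar p_{b\ell+1} \le 2/3$, hence $1-\bar p_{b\ell+1} \ge 1/3$. Also $\bar t_{\ell+1} \le T \le 2/(9(4\av n + 2\bv)p + 3\ai n p)$. Plugging these into \cref{eq:sbl} gives a recursion of the form $\ts_\ell \le T\bigl(6(\bar m_{\ell+1} + \ts_{\ell+1}) + \ai n\bigr)$, and since $T(\av n + \bv)p$ and $T\ai n p$ are both $O(1)$ by the definition \cref{eq:T} of $T$, the additive terms $T\bar m_{\ell+1}$ and $T\ai n$ are bounded by a constant (indeed, tracking constants, $6T\bar m_{\ell+1} + T\ai n \le 6Tm + T\ai n$, and $Tm = T(\ap n + R(\av n+\bv))$, $T\ai n$ are $O(1)$). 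So the recursion reads $\ts_\ell \le a\,\ts_{\ell+1} + b$ with $a = 6T$ and $b = O(1)$; but I expect the cleanest route is to fold everything so that $a$ itself is bounded by a universal constant, and then $\ts_1 \le a^{D-1}\ts_D + b(a^{D-1} + \cdots + 1) \le (a^{D-1})(\ts_D + O(1))$, with $\ts_D = \bar s_D \le 2/(3p)$.

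The step I expect to be the main obstacle is the careful bookkeeping of constants so that the final bound collapses exactly to $2(12^D)$: one needs $a \cdot (\text{prefactor per level}) \le 12$, and the base case ($\ell = D$, a direct implementation of a size-$\bar s_D \le 2/(3p)$ circuit, contributing $\ts_D$ and then one $\CliNR_1$ layer on top) has to be matched against the $D$ exponent rather than $D-1$. The safest way is to prove by induction on $D$ that the overhead of any uniformly bounded implementation of depth $D$ is $\le 2\cdot 12^D$: the base case $D=1$ is \cref{lemma:clinr1} specialized to $\ts = \bar s_D$, $\plog(C) = 1 - (1-p)^{\bar s_D} \le \bar s_D p \le 2/3$, which gives $\gateoverhead \le 2$ after using $(1-p)^m(1-\plog) \ge (1/2)(1/3) = 1/6$ and that $(m + \bar s_D)/\bar s_D + \ai n/\bar s_D$ is $O(1)$ times $1/(\bar s_D)$ — actually $\le 12$ after bounding $m \le \bar s_D$ (a consequence of the hypothesis $m < 1/(2p)$ and $\bar s_D \ge 1/(2p)$) and $\ai n \le m$; and then the inductive step observes that building depth $D$ from depth $D-1$ multiplies the gate count by at most the same per-level factor $\le 12$, because each depth-$(D-1)$ subcircuit gets wrapped in a $\CliNR_1$ layer whose own overhead relative to its (depth $D-1$) input is $\le 12$ by the same $\CliNR_1$ estimate with the uniform bounds $(1-p)^m(1-\bar p_{b\ell}) \ge 1/6$. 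Thus $\gateoverhead \le 2 \cdot 12^{D}$, and combined with $D = \lceil \log(sp) + 1\rceil$ this is $\le 24\lceil (sp)^4\rceil$ as claimed in \cref{theorem:main}.
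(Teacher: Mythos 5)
Your proposal is correct and follows essentially the same route as the paper: apply the recursive bound from \cref{lemma:clinrtoverhead}, use the uniform bounds $mp<1/2$ and $\bar p_{b\ell}\le 2/3$ to get the $1/6$ denominator, unroll the recursion in \cref{eq:sbl} to obtain a per-level factor $12t$ with $\ts_D\le\bar s_D$, and let the prefactor $\bar t_1/s$ cancel the $t^{D-1}\bar s_D$. The only small inaccuracy is the aside ``$\ai n\le m$,'' which need not hold; the paper instead groups $\ai n$ with $m$ and uses the hypothesis $m+\ai n<1/(2p)\le\ts_{\ell+1}$ to land exactly on the factor $12$.
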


\begin{proof}
    Starting with the bounds on $\gateoverhead$ in \cref{lemma:clinrtoverhead} and plugging $p_{b\ell}<2/3$  in  \cref{eq:sbl}, together with  the condition  $mp< 1/2 $ and the Bernoulli bound we get,

   \[ \ts_\ell \le t \left ( 6(m+\ts_{\ell+1}) + \ai n \right).  \]
    We also have $\bar{s}_D \le  \frac{2}{3p}$ by construction. By assumption  $m +\ai n < \frac{1}{2p}$ reducing the expression to 
    
    \[ \ts_\ell \le t \left ( 6(\frac{1}{2p}+\ts_{\ell+1})\right ). \] 

    This implies $\ts_\ell \le 12 \ts_{\ell+1}$ since  $\ts_{\ell+1} \ge  \frac{1}{2p} $. We then get  
    $\gateoverhead s \le  12 t_1 (12 t)^{D-1} \bar{s}_D \le  12^{D} 2 s$. Where in the last step we used $t_1 \le \left\lceil \frac{s}{\bar{s}_D t^{D-1}}\right \rceil$.

\end{proof}

\section{Numerical results} \label{sec:numerical}
To show the advantage of Recursive $\CliNR$ in experimentally relevant regimes we use two different techniques: A fast Markov model technique 
(see \cref{sec:estimates})  and a Monte Carlo simulation.  The Markov model (based on \cite{ibm}) works by tracking the probabilities of errors and stabilizer measurement outcomes through the circuit. It is significantly faster than the  simulation (built using Stim \cite{Gidney_Stim_a_fast_2021}), but also less accurate (see \cref{fig:analyt_vs_sim_plog_1} in \cref{app:estimates}).  In both cases we consider the circuit level noise model with error rate $p$ for two-qubit gates, $p/10$ for single-qubit operations (gate and measurements) and either $p/1000$ for idle qubits or no idle errors. The errors are  depolarizing for qubits and bit-flip for measurements.

For the simulation, we use $p=10^{-3}$. For the Markov model we use both $p=10^{-3}$ and $10^{-4}$ and supplement the error model with estimated parameters given  in \cref{sec:example}. Results are shown in \cref{fig:estimated-performance} (Markov model) and \cref{fig:pareto} (Simulation). 

The values were chosen based on existing results for trapped ions.  One and two qubit errors  below $0.002\%$ and $0.01\%$ respectively have been demonstrated  with trapped ions \cite{oxfordgates,hughes2025trapped}. Ion qubit coherence times of $T_2^*>1000$ seconds have been demonstrated \cite{Wang_2021}, and theoretical results predict fast two gates for trapped ions  on the order of a microsecond \cite{fastgates}. At these gate speeds and $T_2^*$, the idle error rate should be lower than $10^{-9}$.

The advantages of Recursive $\CliNR$ are expected to be more pronounced for large circuits. We therefore studied the family of circuits that has the largest expected gate-count (for given $n$), \ie random Clifford circuits, making slight adjustments (either trimming or padding) so that the gate count is fixed at $n^2$. 

\subsection{Estimated performance}
Using the Markov model (\cref{sec:estimates}) we searched for values of $n$  where Recursive $\CliNR$ at depth 2 will outperform the standard $\CliNR$ scheme (\ie Recursive $\CliNR$ at depth 1).  Results presented here are for 70 qubits at $p=10^{-3}$ with idle noise (\cref{fig:estimated-performance} (a)) and 400 qubits at $p=10^{-4}$ with no idle noise (\cref{fig:estimated-performance} (b)). We considered trees with different numbers of vertices at $\ell=1$ (ranging from 1 to 10), each having the same number of children (0 for $D=1$, and 2 to 10 for $D=2$).  Each tree had a fixed $r_{\ell,j}=r$ for $\ell \ne 0$. We used $r$ ranging from 0 to 30.  

The plots (\cref{fig:estimated-performance}) show the Pareto frontier (best $\gateoverhead$, $\plog$ values) at each depth with a cutoff for the overhead, $ \gateoverhead \le 100$. We note that if we allow arbitrarily large $\gateoverhead$, it is possible to reach much lower logical error rates at both depth 1 and depth 2.

The results of the estimates should be used  as a guide for showing relative performance qualitatively, and for finding good tree parameters. 
The precision drops as the overhead increases but improves when there is no idle noise.  
These limitations are discussed in \cref{app:estimates}. 

At $n=70$ (\cref{fig:estimated-performance} (a)) we see that Recursive $\CliNR$ has a significant advantage at around $\gateoverhead = 21$, providing a guide for points we want to simulate (\cref{fig:pareto}). At $n=400$, \cref{fig:estimated-performance} (b)  shows a significant advantage for the depth 2 tree starting at around $\gateoverhead =10 $. We also see that logical error rates  plateau at various points both at depth 1 and 2 and then start improving again. Escaping the plateau requires a tree with fewer vertices, the cost is a (usually significant) increase in  gate overhead.  We note that this continues at larger overheads (not shown) and that both depth 1 and 2 seem to have the same performance at very large $\gateoverhead$.

\begin{figure}
    \centering
        \includegraphics[width=0.7\linewidth]{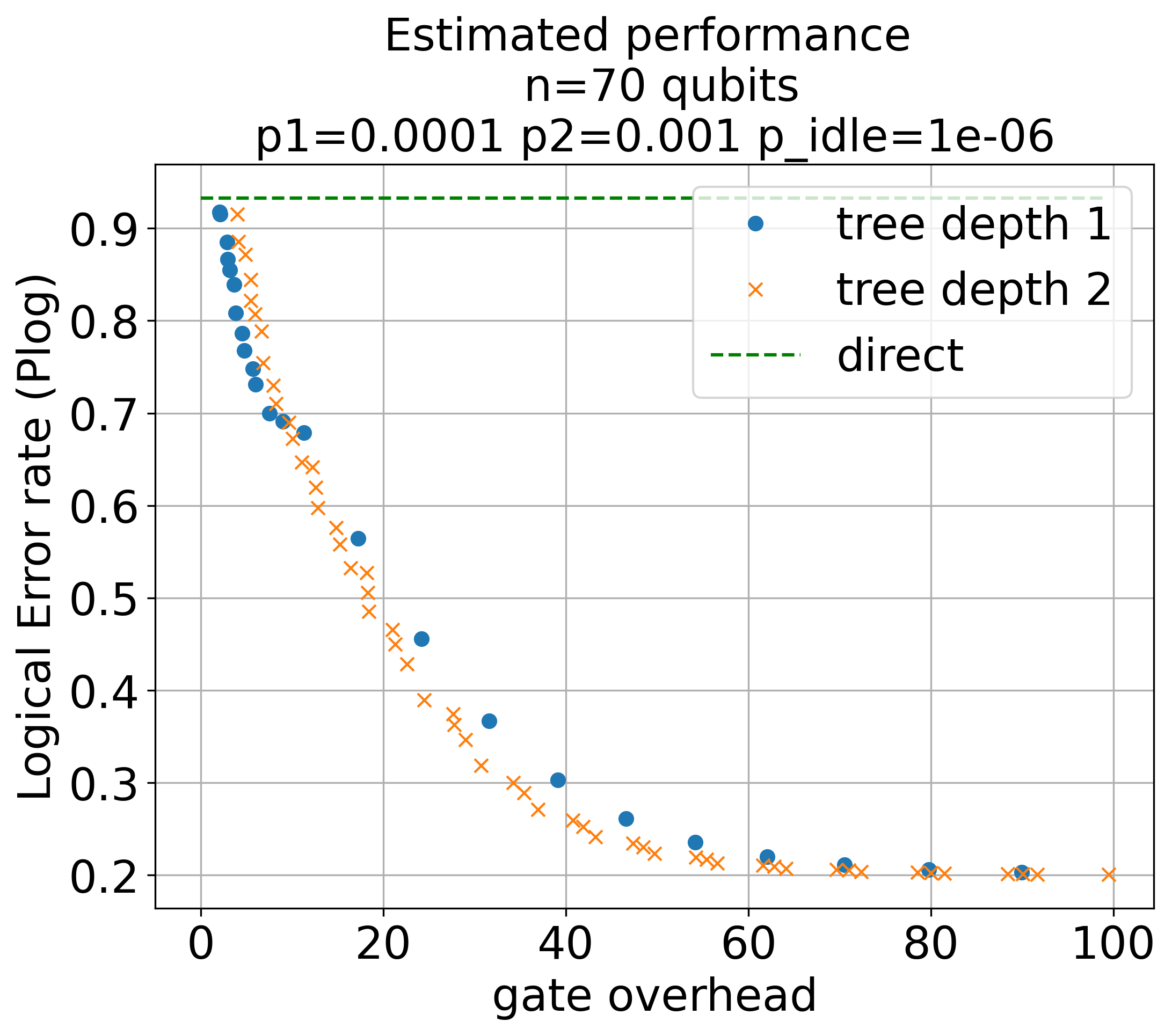} 

       \hspace{10pt} (a)

       \vspace{10pt}
        
    \includegraphics[width=0.7\linewidth]{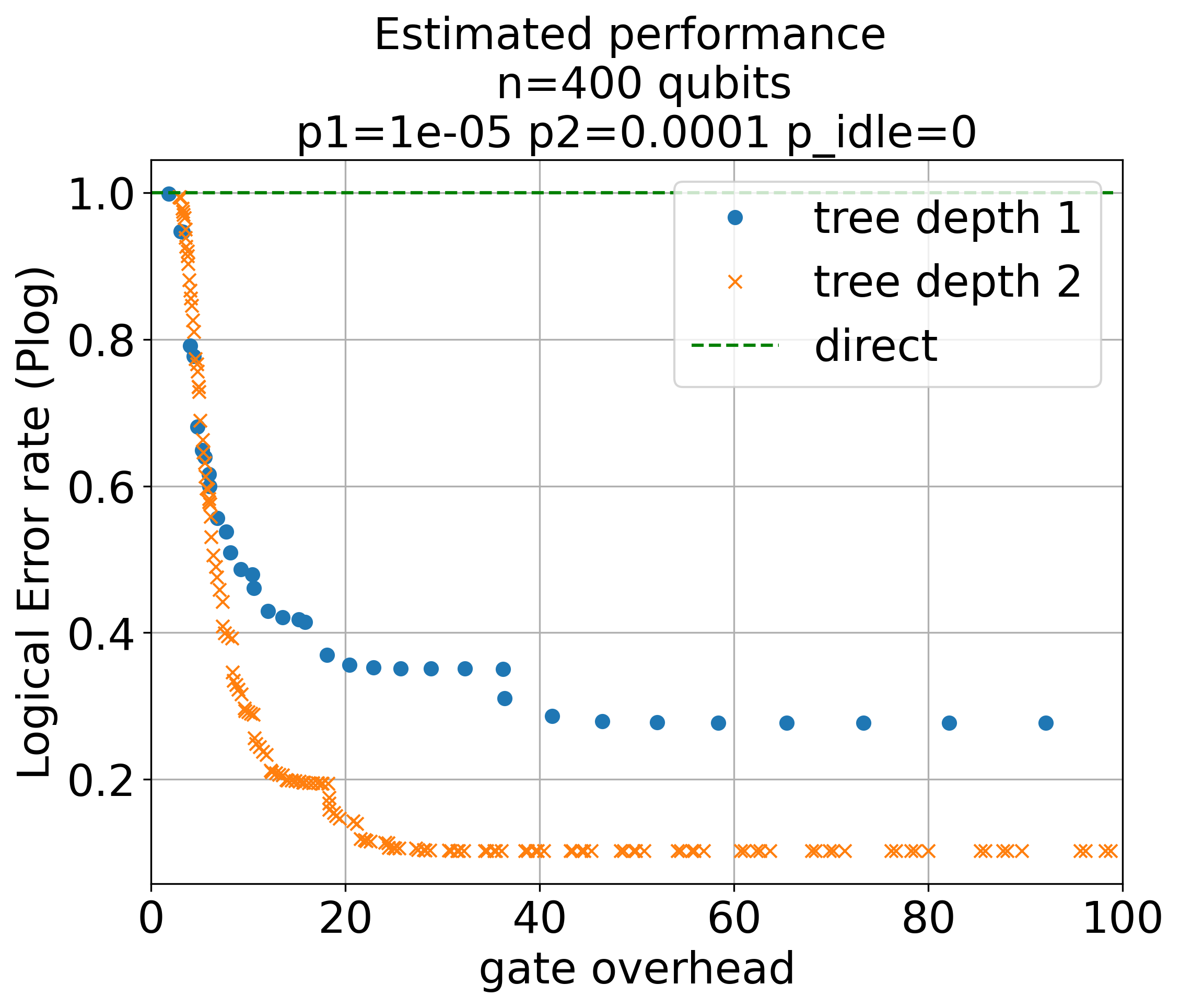}

  \hspace{10pt}  (b)

    \caption{Numerical estimates of the performance based on the Markov model at (a) $n=70$, $p=10^{-3}$ and (b)  $n=400$, $p=10^{-4}$ with no idle noise. In both cases the gate overhead was capped at $\gateoverhead < 100$. Plotted points are the Pareto frontiers at depth 1 and 2.  Note that the performance estimates tend to be less accurate when  the gate overheads are large but the trend remains indicative (see \cref{app:estimates}). Results for $n=70$ can be compared with direct simulation in \cref{fig:pareto}. Results for $n=400$ show a significant advantage when going to depth 2, {\it e.g.} at a gate overhead of $\gateoverhead \approx 25.5$ we have $\plog \approx 0.35$ at depth 1 and $\plog \approx 0.10$ at depth 2.  }
    \label{fig:estimated-performance}
\end{figure}

\subsection{Numerical simulations} \label{sec:simulations}
We used Stim \cite{Gidney_Stim_a_fast_2021} to run  Monte Carlo simulations.  These  included a complete run of the circuit with all restart events. Random Cliffords were generated using Qiskit \cite{Bravyi_2021} with some padding or truncation of the generated circuit so that the gate count is fixed at $n^2$. Each data point in $\cref{fig:pareto}$ is the average over $50$ different Clifford input circuits, each with $80$ shots. The input state was always  $\ket{0}^{\otimes n}$. Error bars are the standard deviation over the $50$ different circuits. 

Using the Markov model, we were unable to identify an advantage for Recursive $\CliNR$ at $n<100$ and  $p=10^{-4}$. Consequently we set $p=10^{-3}$ and $n=70$ to  keep the simulation time manageable.  Results are plotted in  \cref{fig:pareto} and compared to the estimates of the previous section in \cref{app:estimates}. Specific data points were chosen by using the Markov model to generate a Pareto frontier and taking points where $\gateoverhead \le 21$. 

The results show a similar trend to the estimates in \cref{fig:estimated-performance}. When there is no idle noise, we clearly see the advantage of  using depth 2 trees (\ie Recursive $\CliNR$) starting at  $\gateoverhead \approx 15 $. With idle noise, the advantage is less obvious due to inaccuracies introduced by choosing points with the Markov model (see \cref{fig:analyt_vs_sim_plog_1}).  Nevertheless, we  can see  the improvement in depth 2 at the larger values of $\gateoverhead$ as predicted in \cref{fig:estimated-performance} (a).   We also see the impact of idle noise when comparing \cref{fig:pareto} (a) to (b). 

A comparison of the Stim simulation and the Markov method estimates is given in 
\cref{app:estimates}.  

\begin{figure}
    \centering
    \includegraphics[width=0.9\linewidth]{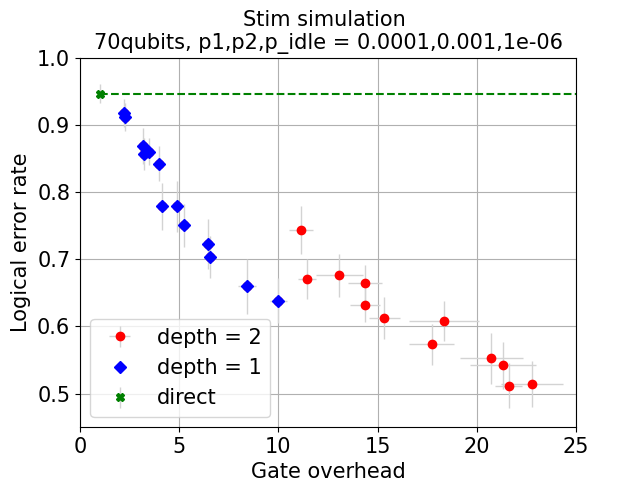}
    
       \hspace{20pt} (a)

       \vspace{10pt}
     
    \includegraphics[width=0.9\linewidth]{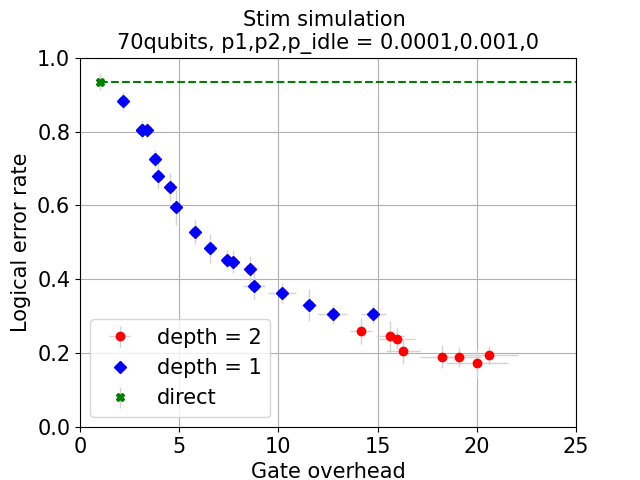}

       \hspace{20pt} (b) 
        
    \caption{Simulation results at $n=70$ and $p=10^{-3}$. (a) With idle noise; (b) Without idle noise.  Results are capped at $\gateoverhead <21$.
    In both cases we see the trend predicted in \cref{fig:estimated-performance} as we move from depth 1 (blue diamonds) to depth 2 (red circles). The direct implementation is given as a reference (green dashed line).}
    \label{fig:pareto}
\end{figure}

\section{Conclusions}
\label{sec:conclusions}
Our main result shows that Recursive $\CliNR$ can be used to reduce the logical error rate of any size circuit to an arbitrarily low value when $np\rightarrow 0$. This result can be compared with the asymptotic bound   $snp^2 \rightarrow 0$ for the original $\CliNR$. For the largest Clifford circuits (of order $n^2/\log(n))$ \cite{Aaronson_2004}) the condition for  $\CliNR_1$ becomes  $(n^3p^2)/\log(n) \rightarrow 0$ compared to $np \rightarrow 0$ for Recursive $\CliNR$.

Our numerical results show that in-practice (assuming realistic physical error rates and near-term qubit counts)  the main advantage of the recursive scheme is at fixed gate overheads.  Recursive $\CliNR$ can have lower logical error rates when we limit $\gateoverhead$.  The advantage is particularly pronounced when the idle errors are very small, a feat that is achievable for trapped ion qubits.

 If more qubits are available,  it is possible to parallelize operations, in particular the sequential implementation of $\CliNR_1$ blocks at the same depth.  We did not study this regime but we expect it to lead  to  significant improvements in  running time, and a significant reduction in idle time. Additionally it is possible to carefully select the stabilizers as in \cite{tham2025optimized} and reduce the overheads in RSV which would subsequently reduce idle time. 

 In the near-term we expect $\CliNR$, Recursive $\CliNR$ and similar schemes to play an important role in quantum computing. Enabling algorithms with circuits that may be deep enough for so-called utility-scale quantum computing without full blown error correction. In the longer term it is likely that these types of schemes will still play a role in conjunction with QEC, maximizing resource efficiency.

\section{Acknowledgment}
The authors thank Felix Tripier, Andy Maksymov, Min Ye, John Gamble and the whole IonQ team for insightful discussions.

\bibliography{references}

\begin{thebibliography}{25}%
\makeatletter
\providecommand \@ifxundefined [1]{%
 \@ifx{#1\undefined}
}%
\providecommand \@ifnum [1]{%
 \ifnum #1\expandafter \@firstoftwo
 \else \expandafter \@secondoftwo
 \fi
}%
\providecommand \@ifx [1]{%
 \ifx #1\expandafter \@firstoftwo
 \else \expandafter \@secondoftwo
 \fi
}%
\providecommand \natexlab [1]{#1}%
\providecommand \enquote  [1]{``#1''}%
\providecommand \bibnamefont  [1]{#1}%
\providecommand \bibfnamefont [1]{#1}%
\providecommand \citenamefont [1]{#1}%
\providecommand \href@noop [0]{\@secondoftwo}%
\providecommand \href [0]{\begingroup \@sanitize@url \@href}%
\providecommand \@href[1]{\@@startlink{#1}\@@href}%
\providecommand \@@href[1]{\endgroup#1\@@endlink}%
\providecommand \@sanitize@url [0]{\catcode `\\12\catcode `\$12\catcode `\&12\catcode `\#12\catcode `\^12\catcode `\_12\catcode `\%12\relax}%
\providecommand \@@startlink[1]{}%
\providecommand \@@endlink[0]{}%
\providecommand \url  [0]{\begingroup\@sanitize@url \@url }%
\providecommand \@url [1]{\endgroup\@href {#1}{\urlprefix }}%
\providecommand \urlprefix  [0]{URL }%
\providecommand \Eprint [0]{\href }%
\providecommand \doibase [0]{https://doi.org/}%
\providecommand \selectlanguage [0]{\@gobble}%
\providecommand \bibinfo  [0]{\@secondoftwo}%
\providecommand \bibfield  [0]{\@secondoftwo}%
\providecommand \translation [1]{[#1]}%
\providecommand \BibitemOpen [0]{}%
\providecommand \bibitemStop [0]{}%
\providecommand \bibitemNoStop [0]{.\EOS\space}%
\providecommand \EOS [0]{\spacefactor3000\relax}%
\providecommand \BibitemShut  [1]{\csname bibitem#1\endcsname}%
\let\auto@bib@innerbib\@empty
\bibitem [{\citenamefont {Aharonov}\ and\ \citenamefont {Ben-Or}(1997)}]{aharonov1997fault}%
  \BibitemOpen
  \bibfield  {author} {\bibinfo {author} {\bibfnamefont {D.}~\bibnamefont {Aharonov}}\ and\ \bibinfo {author} {\bibfnamefont {M.}~\bibnamefont {Ben-Or}},\ }\bibfield  {title} {\bibinfo {title} {Fault-tolerant quantum computation with constant error},\ }in\ \href@noop {} {\emph {\bibinfo {booktitle} {Proceedings of the twenty-ninth annual ACM symposium on Theory of computing}}}\ (\bibinfo {year} {1997})\ pp.\ \bibinfo {pages} {176--188}\BibitemShut {NoStop}%
\bibitem [{\citenamefont {Aliferis}\ \emph {et~al.}(2005)\citenamefont {Aliferis}, \citenamefont {Gottesman},\ and\ \citenamefont {Preskill}}]{aliferis2005quantum}%
  \BibitemOpen
  \bibfield  {author} {\bibinfo {author} {\bibfnamefont {P.}~\bibnamefont {Aliferis}}, \bibinfo {author} {\bibfnamefont {D.}~\bibnamefont {Gottesman}},\ and\ \bibinfo {author} {\bibfnamefont {J.}~\bibnamefont {Preskill}},\ }\bibfield  {title} {\bibinfo {title} {Quantum accuracy threshold for concatenated distance-3 codes},\ }\href@noop {} {\bibfield  {journal} {\bibinfo  {journal} {arXiv preprint quant-ph/0504218}\ } (\bibinfo {year} {2005})}\BibitemShut {NoStop}%
\bibitem [{\citenamefont {Litinski}(2019)}]{Litinski_2019}%
  \BibitemOpen
  \bibfield  {author} {\bibinfo {author} {\bibfnamefont {D.}~\bibnamefont {Litinski}},\ }\bibfield  {title} {\bibinfo {title} {Magic state distillation: Not as costly as you think},\ }\href {https://doi.org/10.22331/q-2019-12-02-205} {\bibfield  {journal} {\bibinfo  {journal} {Quantum}\ }\textbf {\bibinfo {volume} {3}},\ \bibinfo {pages} {205} (\bibinfo {year} {2019})}\BibitemShut {NoStop}%
\bibitem [{\citenamefont {Bravyi}\ \emph {et~al.}(2024)\citenamefont {Bravyi}, \citenamefont {Cross}, \citenamefont {Gambetta}, \citenamefont {Maslov}, \citenamefont {Rall},\ and\ \citenamefont {Yoder}}]{bravyi2024high}%
  \BibitemOpen
  \bibfield  {author} {\bibinfo {author} {\bibfnamefont {S.}~\bibnamefont {Bravyi}}, \bibinfo {author} {\bibfnamefont {A.~W.}\ \bibnamefont {Cross}}, \bibinfo {author} {\bibfnamefont {J.~M.}\ \bibnamefont {Gambetta}}, \bibinfo {author} {\bibfnamefont {D.}~\bibnamefont {Maslov}}, \bibinfo {author} {\bibfnamefont {P.}~\bibnamefont {Rall}},\ and\ \bibinfo {author} {\bibfnamefont {T.~J.}\ \bibnamefont {Yoder}},\ }\bibfield  {title} {\bibinfo {title} {High-threshold and low-overhead fault-tolerant quantum memory},\ }\href@noop {} {\bibfield  {journal} {\bibinfo  {journal} {Nature}\ }\textbf {\bibinfo {volume} {627}},\ \bibinfo {pages} {778} (\bibinfo {year} {2024})}\BibitemShut {NoStop}%
\bibitem [{\citenamefont {Gidney}(2025)}]{gidney2025factor2048bitrsa}%
  \BibitemOpen
  \bibfield  {author} {\bibinfo {author} {\bibfnamefont {C.}~\bibnamefont {Gidney}},\ }\href {https://arxiv.org/abs/2505.15917} {\bibinfo {title} {How to factor 2048 bit rsa integers with less than a million noisy qubits}} (\bibinfo {year} {2025}),\ \Eprint {https://arxiv.org/abs/2505.15917} {arXiv:2505.15917 [quant-ph]} \BibitemShut {NoStop}%
\bibitem [{\citenamefont {Yoder}\ \emph {et~al.}(2025)\citenamefont {Yoder}, \citenamefont {Schoute}, \citenamefont {Rall}, \citenamefont {Pritchett}, \citenamefont {Gambetta}, \citenamefont {Cross}, \citenamefont {Carroll},\ and\ \citenamefont {Beverland}}]{ibm}%
  \BibitemOpen
  \bibfield  {author} {\bibinfo {author} {\bibfnamefont {T.~J.}\ \bibnamefont {Yoder}}, \bibinfo {author} {\bibfnamefont {E.}~\bibnamefont {Schoute}}, \bibinfo {author} {\bibfnamefont {P.}~\bibnamefont {Rall}}, \bibinfo {author} {\bibfnamefont {E.}~\bibnamefont {Pritchett}}, \bibinfo {author} {\bibfnamefont {J.~M.}\ \bibnamefont {Gambetta}}, \bibinfo {author} {\bibfnamefont {A.~W.}\ \bibnamefont {Cross}}, \bibinfo {author} {\bibfnamefont {M.}~\bibnamefont {Carroll}},\ and\ \bibinfo {author} {\bibfnamefont {M.~E.}\ \bibnamefont {Beverland}},\ }\href {https://arxiv.org/abs/2506.03094} {\bibinfo {title} {Tour de gross: A modular quantum computer based on bivariate bicycle codes}} (\bibinfo {year} {2025}),\ \Eprint {https://arxiv.org/abs/2506.03094} {arXiv:2506.03094 [quant-ph]} \BibitemShut {NoStop}%
\bibitem [{\citenamefont {Arute}\ \emph {et~al.}(2019)\citenamefont {Arute}, \citenamefont {Arya}, \citenamefont {Babbush}, \citenamefont {Bacon}, \citenamefont {Bardin}, \citenamefont {Barends}, \citenamefont {Biswas}, \citenamefont {Boixo}, \citenamefont {Brandao}, \citenamefont {Buell} \emph {et~al.}}]{arute2019quantum}%
  \BibitemOpen
  \bibfield  {author} {\bibinfo {author} {\bibfnamefont {F.}~\bibnamefont {Arute}}, \bibinfo {author} {\bibfnamefont {K.}~\bibnamefont {Arya}}, \bibinfo {author} {\bibfnamefont {R.}~\bibnamefont {Babbush}}, \bibinfo {author} {\bibfnamefont {D.}~\bibnamefont {Bacon}}, \bibinfo {author} {\bibfnamefont {J.~C.}\ \bibnamefont {Bardin}}, \bibinfo {author} {\bibfnamefont {R.}~\bibnamefont {Barends}}, \bibinfo {author} {\bibfnamefont {R.}~\bibnamefont {Biswas}}, \bibinfo {author} {\bibfnamefont {S.}~\bibnamefont {Boixo}}, \bibinfo {author} {\bibfnamefont {F.~G.}\ \bibnamefont {Brandao}}, \bibinfo {author} {\bibfnamefont {D.~A.}\ \bibnamefont {Buell}}, \emph {et~al.},\ }\bibfield  {title} {\bibinfo {title} {Quantum supremacy using a programmable superconducting processor},\ }\href@noop {} {\bibfield  {journal} {\bibinfo  {journal} {Nature}\ }\textbf {\bibinfo {volume} {574}},\ \bibinfo {pages} {505} (\bibinfo {year} {2019})}\BibitemShut {NoStop}%
\bibitem [{\citenamefont {DeCross}\ \emph {et~al.}(2025)\citenamefont {DeCross}, \citenamefont {Haghshenas}, \citenamefont {Liu}, \citenamefont {Rinaldi}, \citenamefont {Gray}, \citenamefont {Alexeev}, \citenamefont {Baldwin}, \citenamefont {Bartolotta}, \citenamefont {Bohn}, \citenamefont {Chertkov}, \citenamefont {Cline}, \citenamefont {Colina}, \citenamefont {DelVento}, \citenamefont {Dreiling}, \citenamefont {Foltz}, \citenamefont {Gaebler}, \citenamefont {Gatterman}, \citenamefont {Gilbreth}, \citenamefont {Giles}, \citenamefont {Gresh}, \citenamefont {Hall}, \citenamefont {Hankin}, \citenamefont {Hansen}, \citenamefont {Hewitt}, \citenamefont {Hoffman}, \citenamefont {Holliman}, \citenamefont {Hutson}, \citenamefont {Jacobs}, \citenamefont {Johansen}, \citenamefont {Lee}, \citenamefont {Lehman}, \citenamefont {Lucchetti}, \citenamefont {Lykov}, \citenamefont {Madjarov}, \citenamefont {Mathewson}, \citenamefont {Mayer}, \citenamefont {Mills}, \citenamefont {Niroula}, \citenamefont {Pino}, \citenamefont
  {Roman}, \citenamefont {Schecter}, \citenamefont {Siegfried}, \citenamefont {Tiemann}, \citenamefont {Volin}, \citenamefont {Walker}, \citenamefont {Shaydulin}, \citenamefont {Pistoia}, \citenamefont {Moses}, \citenamefont {Hayes}, \citenamefont {Neyenhuis}, \citenamefont {Stutz},\ and\ \citenamefont {Foss-Feig}}]{DeCross_2025}%
  \BibitemOpen
  \bibfield  {author} {\bibinfo {author} {\bibfnamefont {M.}~\bibnamefont {DeCross}}, \bibinfo {author} {\bibfnamefont {R.}~\bibnamefont {Haghshenas}}, \bibinfo {author} {\bibfnamefont {M.}~\bibnamefont {Liu}}, \bibinfo {author} {\bibfnamefont {E.}~\bibnamefont {Rinaldi}}, \bibinfo {author} {\bibfnamefont {J.}~\bibnamefont {Gray}}, \bibinfo {author} {\bibfnamefont {Y.}~\bibnamefont {Alexeev}}, \bibinfo {author} {\bibfnamefont {C.}~\bibnamefont {Baldwin}}, \bibinfo {author} {\bibfnamefont {J.}~\bibnamefont {Bartolotta}}, \bibinfo {author} {\bibfnamefont {M.}~\bibnamefont {Bohn}}, \bibinfo {author} {\bibfnamefont {E.}~\bibnamefont {Chertkov}}, \bibinfo {author} {\bibfnamefont {J.}~\bibnamefont {Cline}}, \bibinfo {author} {\bibfnamefont {J.}~\bibnamefont {Colina}}, \bibinfo {author} {\bibfnamefont {D.}~\bibnamefont {DelVento}}, \bibinfo {author} {\bibfnamefont {J.}~\bibnamefont {Dreiling}}, \bibinfo {author} {\bibfnamefont {C.}~\bibnamefont {Foltz}}, \bibinfo {author} {\bibfnamefont {J.}~\bibnamefont {Gaebler}},
  \bibinfo {author} {\bibfnamefont {T.}~\bibnamefont {Gatterman}}, \bibinfo {author} {\bibfnamefont {C.}~\bibnamefont {Gilbreth}}, \bibinfo {author} {\bibfnamefont {J.}~\bibnamefont {Giles}}, \bibinfo {author} {\bibfnamefont {D.}~\bibnamefont {Gresh}}, \bibinfo {author} {\bibfnamefont {A.}~\bibnamefont {Hall}}, \bibinfo {author} {\bibfnamefont {A.}~\bibnamefont {Hankin}}, \bibinfo {author} {\bibfnamefont {A.}~\bibnamefont {Hansen}}, \bibinfo {author} {\bibfnamefont {N.}~\bibnamefont {Hewitt}}, \bibinfo {author} {\bibfnamefont {I.}~\bibnamefont {Hoffman}}, \bibinfo {author} {\bibfnamefont {C.}~\bibnamefont {Holliman}}, \bibinfo {author} {\bibfnamefont {R.}~\bibnamefont {Hutson}}, \bibinfo {author} {\bibfnamefont {T.}~\bibnamefont {Jacobs}}, \bibinfo {author} {\bibfnamefont {J.}~\bibnamefont {Johansen}}, \bibinfo {author} {\bibfnamefont {P.}~\bibnamefont {Lee}}, \bibinfo {author} {\bibfnamefont {E.}~\bibnamefont {Lehman}}, \bibinfo {author} {\bibfnamefont {D.}~\bibnamefont {Lucchetti}}, \bibinfo {author}
  {\bibfnamefont {D.}~\bibnamefont {Lykov}}, \bibinfo {author} {\bibfnamefont {I.}~\bibnamefont {Madjarov}}, \bibinfo {author} {\bibfnamefont {B.}~\bibnamefont {Mathewson}}, \bibinfo {author} {\bibfnamefont {K.}~\bibnamefont {Mayer}}, \bibinfo {author} {\bibfnamefont {M.}~\bibnamefont {Mills}}, \bibinfo {author} {\bibfnamefont {P.}~\bibnamefont {Niroula}}, \bibinfo {author} {\bibfnamefont {J.}~\bibnamefont {Pino}}, \bibinfo {author} {\bibfnamefont {C.}~\bibnamefont {Roman}}, \bibinfo {author} {\bibfnamefont {M.}~\bibnamefont {Schecter}}, \bibinfo {author} {\bibfnamefont {P.}~\bibnamefont {Siegfried}}, \bibinfo {author} {\bibfnamefont {B.}~\bibnamefont {Tiemann}}, \bibinfo {author} {\bibfnamefont {C.}~\bibnamefont {Volin}}, \bibinfo {author} {\bibfnamefont {J.}~\bibnamefont {Walker}}, \bibinfo {author} {\bibfnamefont {R.}~\bibnamefont {Shaydulin}}, \bibinfo {author} {\bibfnamefont {M.}~\bibnamefont {Pistoia}}, \bibinfo {author} {\bibfnamefont {S.}~\bibnamefont {Moses}}, \bibinfo {author} {\bibfnamefont
  {D.}~\bibnamefont {Hayes}}, \bibinfo {author} {\bibfnamefont {B.}~\bibnamefont {Neyenhuis}}, \bibinfo {author} {\bibfnamefont {R.}~\bibnamefont {Stutz}},\ and\ \bibinfo {author} {\bibfnamefont {M.}~\bibnamefont {Foss-Feig}},\ }\bibfield  {title} {\bibinfo {title} {Computational power of random quantum circuits in arbitrary geometries},\ }\bibfield  {journal} {\bibinfo  {journal} {Physical Review X}\ }\textbf {\bibinfo {volume} {15}},\ \href {https://doi.org/10.1103/physrevx.15.021052} {10.1103/physrevx.15.021052} (\bibinfo {year} {2025})\BibitemShut {NoStop}%
\bibitem [{\citenamefont {Morvan}(2024)}]{Morvan_2024}%
  \BibitemOpen
  \bibfield  {author} {\bibinfo {author} {\bibfnamefont {A.~e.~a.}\ \bibnamefont {Morvan}},\ }\bibfield  {title} {\bibinfo {title} {Phase transitions in random circuit sampling},\ }\href {https://doi.org/10.1038/s41586-024-07998-6} {\bibfield  {journal} {\bibinfo  {journal} {Nature}\ }\textbf {\bibinfo {volume} {634}},\ \bibinfo {pages} {328–333} (\bibinfo {year} {2024})}\BibitemShut {NoStop}%
\bibitem [{\citenamefont {Zimborás}\ \emph {et~al.}(2025)\citenamefont {Zimborás}, \citenamefont {Koczor}, \citenamefont {Holmes}, \citenamefont {Borrelli}, \citenamefont {Gilyén}, \citenamefont {Huang}, \citenamefont {Cai}, \citenamefont {Acín}, \citenamefont {Aolita}, \citenamefont {Banchi}, \citenamefont {Brandão}, \citenamefont {Cavalcanti}, \citenamefont {Cubitt}, \citenamefont {Filippov}, \citenamefont {García-Pérez}, \citenamefont {Goold}, \citenamefont {Kálmán}, \citenamefont {Kyoseva}, \citenamefont {Rossi}, \citenamefont {Sokolov}, \citenamefont {Tavernelli},\ and\ \citenamefont {Maniscalco}}]{myths}%
  \BibitemOpen
  \bibfield  {author} {\bibinfo {author} {\bibfnamefont {Z.}~\bibnamefont {Zimborás}}, \bibinfo {author} {\bibfnamefont {B.}~\bibnamefont {Koczor}}, \bibinfo {author} {\bibfnamefont {Z.}~\bibnamefont {Holmes}}, \bibinfo {author} {\bibfnamefont {E.-M.}\ \bibnamefont {Borrelli}}, \bibinfo {author} {\bibfnamefont {A.}~\bibnamefont {Gilyén}}, \bibinfo {author} {\bibfnamefont {H.-Y.}\ \bibnamefont {Huang}}, \bibinfo {author} {\bibfnamefont {Z.}~\bibnamefont {Cai}}, \bibinfo {author} {\bibfnamefont {A.}~\bibnamefont {Acín}}, \bibinfo {author} {\bibfnamefont {L.}~\bibnamefont {Aolita}}, \bibinfo {author} {\bibfnamefont {L.}~\bibnamefont {Banchi}}, \bibinfo {author} {\bibfnamefont {F.~G. S.~L.}\ \bibnamefont {Brandão}}, \bibinfo {author} {\bibfnamefont {D.}~\bibnamefont {Cavalcanti}}, \bibinfo {author} {\bibfnamefont {T.}~\bibnamefont {Cubitt}}, \bibinfo {author} {\bibfnamefont {S.~N.}\ \bibnamefont {Filippov}}, \bibinfo {author} {\bibfnamefont {G.}~\bibnamefont {García-Pérez}}, \bibinfo {author} {\bibfnamefont
  {J.}~\bibnamefont {Goold}}, \bibinfo {author} {\bibfnamefont {O.}~\bibnamefont {Kálmán}}, \bibinfo {author} {\bibfnamefont {E.}~\bibnamefont {Kyoseva}}, \bibinfo {author} {\bibfnamefont {M.~A.~C.}\ \bibnamefont {Rossi}}, \bibinfo {author} {\bibfnamefont {B.}~\bibnamefont {Sokolov}}, \bibinfo {author} {\bibfnamefont {I.}~\bibnamefont {Tavernelli}},\ and\ \bibinfo {author} {\bibfnamefont {S.}~\bibnamefont {Maniscalco}},\ }\href {https://arxiv.org/abs/2501.05694} {\bibinfo {title} {Myths around quantum computation before full fault tolerance: What no-go theorems rule out and what they don't}} (\bibinfo {year} {2025}),\ \Eprint {https://arxiv.org/abs/2501.05694} {arXiv:2501.05694 [quant-ph]} \BibitemShut {NoStop}%
\bibitem [{\citenamefont {Cai}\ and\ \citenamefont {Benjamin}(2019)}]{twirl}%
  \BibitemOpen
  \bibfield  {author} {\bibinfo {author} {\bibfnamefont {Z.}~\bibnamefont {Cai}}\ and\ \bibinfo {author} {\bibfnamefont {S.~C.}\ \bibnamefont {Benjamin}},\ }\bibfield  {title} {\bibinfo {title} {Constructing smaller pauli twirling sets for arbitrary error channels},\ }\href {https://doi.org/10.1038/s41598-019-46722-7} {\bibfield  {journal} {\bibinfo  {journal} {Scientific Reports}\ }\textbf {\bibinfo {volume} {9}},\ \bibinfo {pages} {11281} (\bibinfo {year} {2019})}\BibitemShut {NoStop}%
\bibitem [{\citenamefont {Debroy}\ and\ \citenamefont {Brown}(2020)}]{debroy2020extended}%
  \BibitemOpen
  \bibfield  {author} {\bibinfo {author} {\bibfnamefont {D.~M.}\ \bibnamefont {Debroy}}\ and\ \bibinfo {author} {\bibfnamefont {K.~R.}\ \bibnamefont {Brown}},\ }\bibfield  {title} {\bibinfo {title} {Extended flag gadgets for low-overhead circuit verification},\ }\href@noop {} {\bibfield  {journal} {\bibinfo  {journal} {Physical Review A}\ }\textbf {\bibinfo {volume} {102}},\ \bibinfo {pages} {052409} (\bibinfo {year} {2020})}\BibitemShut {NoStop}%
\bibitem [{\citenamefont {van~den Berg}\ \emph {et~al.}(2022)\citenamefont {van~den Berg}, \citenamefont {Bravyi}, \citenamefont {Gambetta}, \citenamefont {Jurcevic}, \citenamefont {Maslov},\ and\ \citenamefont {Temme}}]{ibmflags}%
  \BibitemOpen
  \bibfield  {author} {\bibinfo {author} {\bibfnamefont {E.}~\bibnamefont {van~den Berg}}, \bibinfo {author} {\bibfnamefont {S.}~\bibnamefont {Bravyi}}, \bibinfo {author} {\bibfnamefont {J.~M.}\ \bibnamefont {Gambetta}}, \bibinfo {author} {\bibfnamefont {P.}~\bibnamefont {Jurcevic}}, \bibinfo {author} {\bibfnamefont {D.}~\bibnamefont {Maslov}},\ and\ \bibinfo {author} {\bibfnamefont {K.}~\bibnamefont {Temme}},\ }\href {https://arxiv.org/abs/2212.03937} {\bibinfo {title} {Single-shot error mitigation by coherent pauli checks}} (\bibinfo {year} {2022}),\ \Eprint {https://arxiv.org/abs/2212.03937} {arXiv:2212.03937 [quant-ph]} \BibitemShut {NoStop}%
\bibitem [{\citenamefont {Chen}\ \emph {et~al.}(2024)\citenamefont {Chen}, \citenamefont {Nielsen}, \citenamefont {Ebert}, \citenamefont {Inlek}, \citenamefont {Wright}, \citenamefont {Chaplin}, \citenamefont {Maksymov}, \citenamefont {Páez}, \citenamefont {Poudel}, \citenamefont {Maunz},\ and\ \citenamefont {Gamble}}]{Chen_2024}%
  \BibitemOpen
  \bibfield  {author} {\bibinfo {author} {\bibfnamefont {J.-S.}\ \bibnamefont {Chen}}, \bibinfo {author} {\bibfnamefont {E.}~\bibnamefont {Nielsen}}, \bibinfo {author} {\bibfnamefont {M.}~\bibnamefont {Ebert}}, \bibinfo {author} {\bibfnamefont {V.}~\bibnamefont {Inlek}}, \bibinfo {author} {\bibfnamefont {K.}~\bibnamefont {Wright}}, \bibinfo {author} {\bibfnamefont {V.}~\bibnamefont {Chaplin}}, \bibinfo {author} {\bibfnamefont {A.}~\bibnamefont {Maksymov}}, \bibinfo {author} {\bibfnamefont {E.}~\bibnamefont {Páez}}, \bibinfo {author} {\bibfnamefont {A.}~\bibnamefont {Poudel}}, \bibinfo {author} {\bibfnamefont {P.}~\bibnamefont {Maunz}},\ and\ \bibinfo {author} {\bibfnamefont {J.}~\bibnamefont {Gamble}},\ }\bibfield  {title} {\bibinfo {title} {Benchmarking a trapped-ion quantum computer with 30 qubits},\ }\href {https://doi.org/10.22331/q-2024-11-07-1516} {\bibfield  {journal} {\bibinfo  {journal} {Quantum}\ }\textbf {\bibinfo {volume} {8}},\ \bibinfo {pages} {1516} (\bibinfo {year} {2024})}\BibitemShut
  {NoStop}%
\bibitem [{\citenamefont {Chernyshev}\ \emph {et~al.}(2025)\citenamefont {Chernyshev}, \citenamefont {Farrell}, \citenamefont {Illa}, \citenamefont {Savage}, \citenamefont {Maksymov}, \citenamefont {Tripier}, \citenamefont {Lopez-Ruiz}, \citenamefont {Arrasmith}, \citenamefont {de~Sereville}, \citenamefont {Brodutch}, \citenamefont {Girotto}, \citenamefont {Kaushik},\ and\ \citenamefont {Roetteler}}]{neutrino}%
  \BibitemOpen
  \bibfield  {author} {\bibinfo {author} {\bibfnamefont {I.~A.}\ \bibnamefont {Chernyshev}}, \bibinfo {author} {\bibfnamefont {R.~C.}\ \bibnamefont {Farrell}}, \bibinfo {author} {\bibfnamefont {M.}~\bibnamefont {Illa}}, \bibinfo {author} {\bibfnamefont {M.~J.}\ \bibnamefont {Savage}}, \bibinfo {author} {\bibfnamefont {A.}~\bibnamefont {Maksymov}}, \bibinfo {author} {\bibfnamefont {F.}~\bibnamefont {Tripier}}, \bibinfo {author} {\bibfnamefont {M.~A.}\ \bibnamefont {Lopez-Ruiz}}, \bibinfo {author} {\bibfnamefont {A.}~\bibnamefont {Arrasmith}}, \bibinfo {author} {\bibfnamefont {Y.}~\bibnamefont {de~Sereville}}, \bibinfo {author} {\bibfnamefont {A.}~\bibnamefont {Brodutch}}, \bibinfo {author} {\bibfnamefont {C.}~\bibnamefont {Girotto}}, \bibinfo {author} {\bibfnamefont {A.}~\bibnamefont {Kaushik}},\ and\ \bibinfo {author} {\bibfnamefont {M.}~\bibnamefont {Roetteler}},\ }\href {https://arxiv.org/abs/2506.05757} {\bibinfo {title} {Pathfinding quantum simulations of neutrinoless double-$\beta$ decay}} (\bibinfo
  {year} {2025}),\ \Eprint {https://arxiv.org/abs/2506.05757} {arXiv:2506.05757 [quant-ph]} \BibitemShut {NoStop}%
\bibitem [{\citenamefont {Roffe}\ \emph {et~al.}(2018)\citenamefont {Roffe}, \citenamefont {Headley}, \citenamefont {Chancellor}, \citenamefont {Horsman},\ and\ \citenamefont {Kendon}}]{roffe2018protecting}%
  \BibitemOpen
  \bibfield  {author} {\bibinfo {author} {\bibfnamefont {J.}~\bibnamefont {Roffe}}, \bibinfo {author} {\bibfnamefont {D.}~\bibnamefont {Headley}}, \bibinfo {author} {\bibfnamefont {N.}~\bibnamefont {Chancellor}}, \bibinfo {author} {\bibfnamefont {D.}~\bibnamefont {Horsman}},\ and\ \bibinfo {author} {\bibfnamefont {V.}~\bibnamefont {Kendon}},\ }\bibfield  {title} {\bibinfo {title} {Protecting quantum memories using coherent parity check codes},\ }\href@noop {} {\bibfield  {journal} {\bibinfo  {journal} {Quantum Science and Technology}\ }\textbf {\bibinfo {volume} {3}},\ \bibinfo {pages} {035010} (\bibinfo {year} {2018})}\BibitemShut {NoStop}%
\bibitem [{\citenamefont {Delfosse}\ and\ \citenamefont {Tham}(2024)}]{delfosse2024low}%
  \BibitemOpen
  \bibfield  {author} {\bibinfo {author} {\bibfnamefont {N.}~\bibnamefont {Delfosse}}\ and\ \bibinfo {author} {\bibfnamefont {E.}~\bibnamefont {Tham}},\ }\bibfield  {title} {\bibinfo {title} {Low-cost noise reduction for clifford circuits},\ }\href@noop {} {\bibfield  {journal} {\bibinfo  {journal} {arXiv preprint arXiv:2407.06583}\ } (\bibinfo {year} {2024})}\BibitemShut {NoStop}%
\bibitem [{\citenamefont {Tham}\ and\ \citenamefont {Delfosse}(2025)}]{tham2025optimized}%
  \BibitemOpen
  \bibfield  {author} {\bibinfo {author} {\bibfnamefont {E.}~\bibnamefont {Tham}}\ and\ \bibinfo {author} {\bibfnamefont {N.}~\bibnamefont {Delfosse}},\ }\href {https://arxiv.org/abs/2504.13356} {\bibinfo {title} {Optimized clifford noise reduction: Theory, simulations and experiments}} (\bibinfo {year} {2025}),\ \Eprint {https://arxiv.org/abs/2504.13356} {arXiv:2504.13356 [quant-ph]} \BibitemShut {NoStop}%
\bibitem [{\citenamefont {Gidney}(2021)}]{Gidney_Stim_a_fast_2021}%
  \BibitemOpen
  \bibfield  {author} {\bibinfo {author} {\bibfnamefont {C.}~\bibnamefont {Gidney}},\ }\bibfield  {title} {\bibinfo {title} {{Stim: a fast stabilizer circuit simulator}},\ }\href {https://doi.org/10.22331/q-2021-07-06-497} {\bibfield  {journal} {\bibinfo  {journal} {Quantum}\ }\textbf {\bibinfo {volume} {5}},\ \bibinfo {pages} {497} (\bibinfo {year} {2021})}\BibitemShut {NoStop}%
\bibitem [{\citenamefont {Löschnauer}\ \emph {et~al.}(2024)\citenamefont {Löschnauer}, \citenamefont {Toba}, \citenamefont {Hughes}, \citenamefont {King}, \citenamefont {Weber}, \citenamefont {Srinivas}, \citenamefont {Matt}, \citenamefont {Nourshargh}, \citenamefont {Allcock}, \citenamefont {Ballance}, \citenamefont {Matthiesen}, \citenamefont {Malinowski},\ and\ \citenamefont {Harty}}]{oxfordgates}%
  \BibitemOpen
  \bibfield  {author} {\bibinfo {author} {\bibfnamefont {C.~M.}\ \bibnamefont {Löschnauer}}, \bibinfo {author} {\bibfnamefont {J.~M.}\ \bibnamefont {Toba}}, \bibinfo {author} {\bibfnamefont {A.~C.}\ \bibnamefont {Hughes}}, \bibinfo {author} {\bibfnamefont {S.~A.}\ \bibnamefont {King}}, \bibinfo {author} {\bibfnamefont {M.~A.}\ \bibnamefont {Weber}}, \bibinfo {author} {\bibfnamefont {R.}~\bibnamefont {Srinivas}}, \bibinfo {author} {\bibfnamefont {R.}~\bibnamefont {Matt}}, \bibinfo {author} {\bibfnamefont {R.}~\bibnamefont {Nourshargh}}, \bibinfo {author} {\bibfnamefont {D.~T.~C.}\ \bibnamefont {Allcock}}, \bibinfo {author} {\bibfnamefont {C.~J.}\ \bibnamefont {Ballance}}, \bibinfo {author} {\bibfnamefont {C.}~\bibnamefont {Matthiesen}}, \bibinfo {author} {\bibfnamefont {M.}~\bibnamefont {Malinowski}},\ and\ \bibinfo {author} {\bibfnamefont {T.~P.}\ \bibnamefont {Harty}},\ }\href {https://arxiv.org/abs/2407.07694} {\bibinfo {title} {Scalable, high-fidelity all-electronic control of trapped-ion qubits}}
  (\bibinfo {year} {2024}),\ \Eprint {https://arxiv.org/abs/2407.07694} {arXiv:2407.07694 [quant-ph]} \BibitemShut {NoStop}%
\bibitem [{\citenamefont {Hughes}\ \emph {et~al.}(2025)\citenamefont {Hughes}, \citenamefont {Srinivas}, \citenamefont {L{\"o}schnauer}, \citenamefont {Knaack}, \citenamefont {Matt}, \citenamefont {Ballance}, \citenamefont {Malinowski}, \citenamefont {Harty},\ and\ \citenamefont {Sutherland}}]{hughes2025trapped}%
  \BibitemOpen
  \bibfield  {author} {\bibinfo {author} {\bibfnamefont {A.}~\bibnamefont {Hughes}}, \bibinfo {author} {\bibfnamefont {R.}~\bibnamefont {Srinivas}}, \bibinfo {author} {\bibfnamefont {C.}~\bibnamefont {L{\"o}schnauer}}, \bibinfo {author} {\bibfnamefont {H.}~\bibnamefont {Knaack}}, \bibinfo {author} {\bibfnamefont {R.}~\bibnamefont {Matt}}, \bibinfo {author} {\bibfnamefont {C.}~\bibnamefont {Ballance}}, \bibinfo {author} {\bibfnamefont {M.}~\bibnamefont {Malinowski}}, \bibinfo {author} {\bibfnamefont {T.}~\bibnamefont {Harty}},\ and\ \bibinfo {author} {\bibfnamefont {R.}~\bibnamefont {Sutherland}},\ }\bibfield  {title} {\bibinfo {title} {Trapped-ion two-qubit gates with> 99.99\% fidelity without ground-state cooling},\ }\href@noop {} {\bibfield  {journal} {\bibinfo  {journal} {arXiv preprint arXiv:2510.17286}\ } (\bibinfo {year} {2025})}\BibitemShut {NoStop}%
\bibitem [{\citenamefont {Wang}\ \emph {et~al.}(2021)\citenamefont {Wang}, \citenamefont {Luan}, \citenamefont {Qiao}, \citenamefont {Um}, \citenamefont {Zhang}, \citenamefont {Wang}, \citenamefont {Yuan}, \citenamefont {Gu}, \citenamefont {Zhang},\ and\ \citenamefont {Kim}}]{Wang_2021}%
  \BibitemOpen
  \bibfield  {author} {\bibinfo {author} {\bibfnamefont {P.}~\bibnamefont {Wang}}, \bibinfo {author} {\bibfnamefont {C.-Y.}\ \bibnamefont {Luan}}, \bibinfo {author} {\bibfnamefont {M.}~\bibnamefont {Qiao}}, \bibinfo {author} {\bibfnamefont {M.}~\bibnamefont {Um}}, \bibinfo {author} {\bibfnamefont {J.}~\bibnamefont {Zhang}}, \bibinfo {author} {\bibfnamefont {Y.}~\bibnamefont {Wang}}, \bibinfo {author} {\bibfnamefont {X.}~\bibnamefont {Yuan}}, \bibinfo {author} {\bibfnamefont {M.}~\bibnamefont {Gu}}, \bibinfo {author} {\bibfnamefont {J.}~\bibnamefont {Zhang}},\ and\ \bibinfo {author} {\bibfnamefont {K.}~\bibnamefont {Kim}},\ }\bibfield  {title} {\bibinfo {title} {Single ion qubit with estimated coherence time exceeding one hour},\ }\bibfield  {journal} {\bibinfo  {journal} {Nature Communications}\ }\textbf {\bibinfo {volume} {12}},\ \href {https://doi.org/10.1038/s41467-020-20330-w} {10.1038/s41467-020-20330-w} (\bibinfo {year} {2021})\BibitemShut {NoStop}%
\bibitem [{\citenamefont {Mehdi}\ \emph {et~al.}(2025)\citenamefont {Mehdi}, \citenamefont {Vaidya}, \citenamefont {Savill-Brown}, \citenamefont {Grosser}, \citenamefont {Ratcliffe}, \citenamefont {Liu}, \citenamefont {Haine}, \citenamefont {Hope},\ and\ \citenamefont {Viteri}}]{fastgates}%
  \BibitemOpen
  \bibfield  {author} {\bibinfo {author} {\bibfnamefont {Z.}~\bibnamefont {Mehdi}}, \bibinfo {author} {\bibfnamefont {V.~D.}\ \bibnamefont {Vaidya}}, \bibinfo {author} {\bibfnamefont {I.}~\bibnamefont {Savill-Brown}}, \bibinfo {author} {\bibfnamefont {P.}~\bibnamefont {Grosser}}, \bibinfo {author} {\bibfnamefont {A.~K.}\ \bibnamefont {Ratcliffe}}, \bibinfo {author} {\bibfnamefont {H.}~\bibnamefont {Liu}}, \bibinfo {author} {\bibfnamefont {S.~A.}\ \bibnamefont {Haine}}, \bibinfo {author} {\bibfnamefont {J.~J.}\ \bibnamefont {Hope}},\ and\ \bibinfo {author} {\bibfnamefont {C.~R.}\ \bibnamefont {Viteri}},\ }\href {https://arxiv.org/abs/2412.07185} {\bibinfo {title} {Fast mixed-species quantum logic gates for trapped-ion quantum networks}} (\bibinfo {year} {2025}),\ \Eprint {https://arxiv.org/abs/2412.07185} {arXiv:2412.07185 [quant-ph]} \BibitemShut {NoStop}%
\bibitem [{\citenamefont {Bravyi}\ and\ \citenamefont {Maslov}(2021)}]{Bravyi_2021}%
  \BibitemOpen
  \bibfield  {author} {\bibinfo {author} {\bibfnamefont {S.}~\bibnamefont {Bravyi}}\ and\ \bibinfo {author} {\bibfnamefont {D.}~\bibnamefont {Maslov}},\ }\bibfield  {title} {\bibinfo {title} {Hadamard-free circuits expose the structure of the clifford group},\ }\href {https://doi.org/10.1109/tit.2021.3081415} {\bibfield  {journal} {\bibinfo  {journal} {IEEE Transactions on Information Theory}\ }\textbf {\bibinfo {volume} {67}},\ \bibinfo {pages} {4546–4563} (\bibinfo {year} {2021})}\BibitemShut {NoStop}%
\bibitem [{\citenamefont {Aaronson}\ and\ \citenamefont {Gottesman}(2004)}]{Aaronson_2004}%
  \BibitemOpen
  \bibfield  {author} {\bibinfo {author} {\bibfnamefont {S.}~\bibnamefont {Aaronson}}\ and\ \bibinfo {author} {\bibfnamefont {D.}~\bibnamefont {Gottesman}},\ }\bibfield  {title} {\bibinfo {title} {Improved simulation of stabilizer circuits},\ }\bibfield  {journal} {\bibinfo  {journal} {Physical Review A}\ }\textbf {\bibinfo {volume} {70}},\ \href {https://doi.org/10.1103/physreva.70.052328} {10.1103/physreva.70.052328} (\bibinfo {year} {2004})\BibitemShut {NoStop}%
\end{thebibliography}%

\appendix
\section{Notation} \label{appendix:notation}
This appendix can be used as a reference for notation introduced in the paper (mostly in \cref{sec:notation}).
The following general convention are used throughout the paper. 
\begin{itemize}
    \item Maximum over a set is indicated by a bar as in $\bar{r}{\ell}=\max_j(r_{\ell,j})$
    \item Minimum over a set is indicate by an underline as in $\underline{r}_{\ell,j} = \min_j (r_{\ell,j}) $
\end{itemize}

\cref{tab:notation} below indicates specific notation

\begin{table}[h]
    \centering
\begin{tabular}{|c|c|c|}
    \hline
         Notation&  Meaning& Notes\\\hline
 $\tree_{\ell,j}$& Subtree of $\tree$ &See \cref{sec:notation}\\\hline
         $C_{\ell,j}$ &  Clifford subcircuit  & See \cref{algorithm:generalized_CliNR_circuit}\\\hline
         $s_{\ell,j}$ &  Size of $C_{\ell,j}$ &  ${\bf s}(v_{\ell,j})$ in \cref{algorithm:generalized_CliNR_circuit}, \\
 & &$s=s_{0,0}$\\\hline
         $\bar{s}_{\ell}$&  $\max_j(s_{\ell})$& \\\hline
         $ \underline{s}_{\ell} $&  $\min_j(s_{\ell,j})$& \\\hline
         $\ts_{\ell,j}$&  The expected number of gates & \\
 & in the implementation of &\\
 & $\CliNR(\tree_{\ell,j},C_{\ell,j})$&\\\hline
         $\ts_{\ell}$&  $\max_j(\hat{s}_{\ell,j})$& \\\hline
 $r_{\ell,j}$& Number of &${\bf r}(v_{\ell, j})$ in \cref{algorithm:generalized_CliNR_circuit}\\
 & stabilizer measurements $v_{\ell,j}$&\\\hline
 $\bar r_{\ell}$& $\max_j(r_{\ell,j})$&\\\hline
 $\underline r_{\ell}$& $\min_j(r_{\ell,j})$&\\\hline
         $\ap$,  $\av$,  &  Constants used to & Independent of \\
 & &$n$, $p$,  and $s$\\
 $\bv$,  $\ai$& bound gate counts&\\\hline
         $m_{\ell,j}$&  $\ap n+r_{\ell,j}(\av n+\bv)$& no subscript for $\CliNR_1$\\\hline
         $\bar m_{\ell}$&$\max_j(m_\ell)$  & \\ \hline
 $\gp(x)$& $1-(1-p)^x$&\\\hline
 $\plog$& logical error rate &$\plog()$ with argument \\
 & &if context is necessary\\\hline
 $q_{\ell,j}$& Subtree error rate&As in \cref{fig:boundedT}\\\hline
 $\bar{p}_{b\ell}$& Upper bound on $\max_j q_{\ell,j}$.&May not be tight\\\hline
 $\gateoverhead$ & Gate overhead&\\ \hline
 $\qubitoverhead$& Qubit overhead&\\\hline
    \end{tabular}
    \caption{Notation}
    \label{tab:notation}
\end{table}

\section{Optimizing the tree structure} \label{sec:optimizing}

Our aim is to provide practical methods for optimizing the tree parameters. In particular, we are interested finding a good solution to the following task: {\it Given an input circuit $C$ and a noise model, find a tree that optimized the values of $\plog$ and $\gateoverhead$ in  the Recursive $\CliNR$ implementation. } Here \emph{optimal} means that the results are on the Pareto frontier. In practice, we would usually also want to bound the gate overhead ({\it e.g.}  $\gateoverhead\le 100$  in \cref{fig:estimated-performance}). The methods below are heuristic and are expected to provide close to optimal solutions. 

Naively, one can simulate performance for different tree parameters and search for those that provide the best performance. In practice, the number of possible tree parameters is large, and  Monte Carlo simulations such as those used in \cref{sec:simulations} are relatively slow. The Markov model below is less accurate than a simulation (see \cref{app:estimates}) but is significantly faster. It can be used to narrow down the search space.

\subsection{Markov model and transition matrices} \label{sec:estimates}
We use a modification of the Markov model introduced in \cite{ibmflags} and a simplified error and circuit model to estimate errors at each stage. 

The Markov model uses transition matrices to update a probability vector. We use the notation $T_{j,k}$ to denote the entry in the $j$th row and $k$th column of a transition matrix $T$.  We start with estimating performance for $\CliNR_{1,r}$ and then show how the model can be modified for recursive $\CliNR$. 

The Markov model works by propagating a vector $\vp$ of length $r+3$ that represents the probabilities for various events: $\vp_0$, no error; $\vp_1$, undetected error; and $\vp_{k+2}$, detection at stabilizer measurement $k \in \{0, \dots,r-1\} $.  

{\it The  RSP probability vector:} We denote $p_p$ as the logical error rate for RSP.  The state $\vp$ after RSP has  $\vp_0 = 1-p_p$, $\vp_1=p_p$ and $0$ for all other entries. 

{\it The RSV transition matrices:} We use the stochastic matrix $T(k)$ defined below to propagate $\vp$ through the $k^\text{th}$ stabilizer measurement. 
$T_{00}(k) = 1-p_{de}-p_{ue}$ where $p_{de}$ and $p_{ue}$  are the probabilities for errors during the stabilizer measurement (including measurements faults) that will be detected or undetected respectively.  Similarly $T_{k+2,0}(k) = p_{de}$ and $T_{10}=p_{ue}$.

If there are no errors during the stabilizer measurement, the probability that a previously undetected error will be detected is $1/2$, so $T_{1,1}(k)=1/2$, $T_{k+2,1}(k) = 1/2$. We also have $T_{j,j}(k)=1$ for all $j >1$. 

{\it The RSI transition matrix:} we can use a similar matrix $W$ to propagate through RSI. Setting $p_I$ as the logical error probability of RSI we have $W_{00}=1-p_I$, $W_{10}=p_I$ and $W_{k,k}=1$ for all $k>0$. 

{\it Calculating $\plog$ and $\gateoverhead$:} 
The logical error rate  $\plog'$ following this process is calculated using $\plog' = \frac{\vp_{1}}{\vp_0+\vp_1}$. 

{\it Gate count:} If we have  $g_P$ gates for RSP,  $g_C$ gates for each stabilizer measurement with $r'$ stabilizer measurements and $g_I$ gates for for RSI then the expected total number of gates executed is given by 
\begin{equation} \label{eq:gtot}
    g_{tot} = g_P+r'g_C+g_I + \sum_k (g_P + k g_P) m_{\restart(k)} 
\end{equation} 
where $m_{\restart(k)}$ is the expected number of restarts at the $k^\text{th}$ stabilizer measurement, $m_{\restart(k)}=\frac{\vp_{k+2}}{\pres}\left(\frac{1}{1-\pres}-1\right)$ and $\pres = \sum_k\vp_{k+2}$.

{\it Recursive $\CliNR$:} For  Recursive $\CliNR$  we propagate the vector through the circuit, re-initializing $\vp$ after each RSI step. We use  the logical error rate calculated in the previous implementation, $\plog'$,  to update the injection error $p_I$ in a way that depends on the error model (see below) and specifically also accounts for idle errors. 

The last step makes it easy to account for idle errors on the input rail (which can have significant impact). When we go one level up we set $\vp_{0}=1-\plog'$ and $\vp_{1}=\plog'$ and follow the same procedure (see \cref{fig:markov}).

\begin{figure}
    \centering
    \includegraphics[width=1\linewidth]{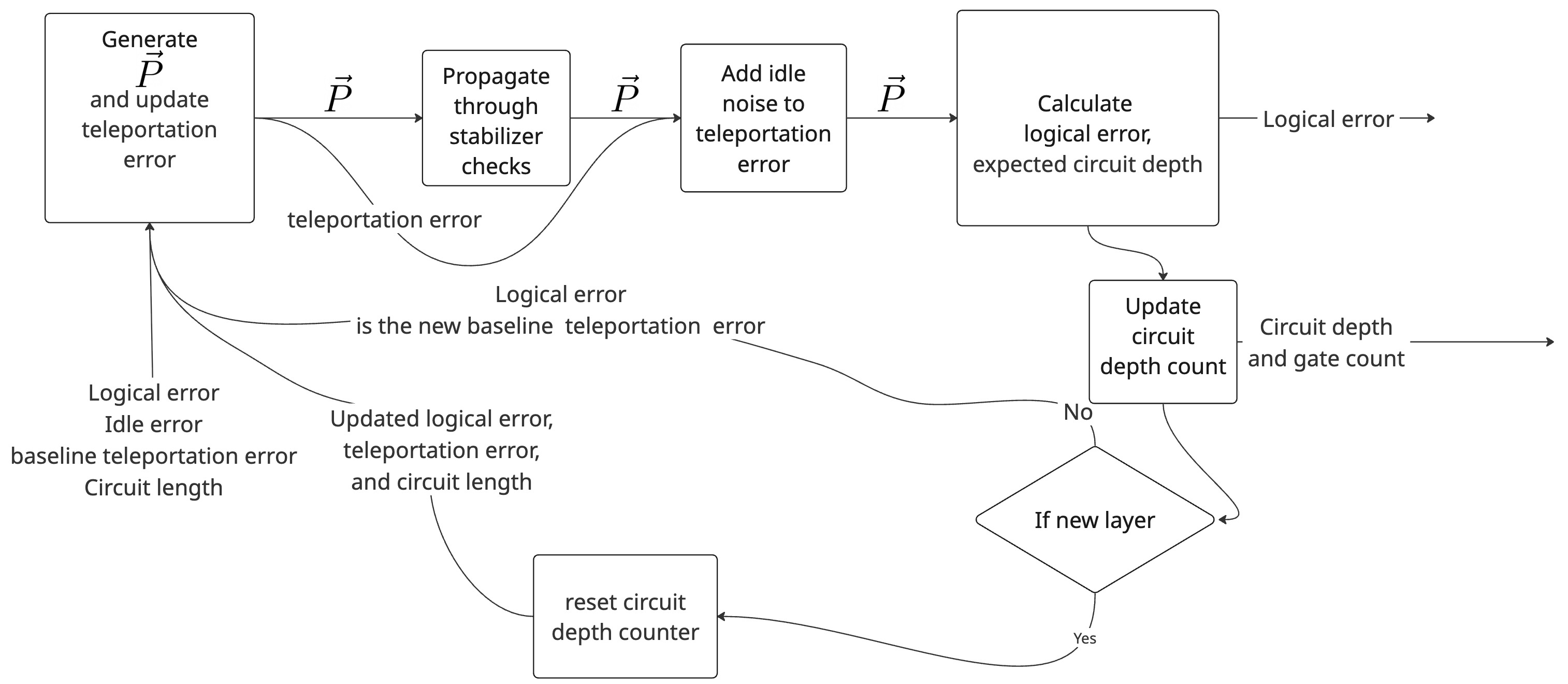}
    \caption{Estimating the logical error rate and gate overhead. The probability vector $\vp$ is initialized with parameters from $RSP$ and propagated through the stabilizer measurements ($RSV$). Idle noise on the input rails is accounted for during $RSI$ and the logical error rate and expected gate count are calculated. The process is then repeated for the next $\CliNR$ block.  At each new level we reset the depth-counter and update the expected depth of the input circuit.}
    \label{fig:markov}
\end{figure}

\subsection{Implementing different error models}
The error model depends on two main factors. The implementation of each stage ({\it e.g.} the gate-set, connectivity, parallel operations) and the noise sources. Our aim is to provide values for $p_p$, $p_{de}$, $p_{ue}$ and $p_I$, noting that these (in particular $p_p$ and $p_I$) change as we go through the different instances of $\CliNR_1$. This is because $p_I$ depends on the input state which becomes more noisy as we go through implementations at the same level, and $p_p$ depends on the input circuit which becomes more noisy as we go up levels. 

The preparation and verification errors $p_p$, $p_{de}$, $p_{ue}$ need to account only for the qubits in the resource state.  The RSI error, $p_I$, includes all errors in the input rails, including  $\plog'$ from the previous circuit, and any idling errors during execution. The idling errors are particularly sensitive to the expected size of the circuit which depends on the detection probabilities $\vp_{k+2}$ (see \cref{app:estimates}).

\subsection{Example} \label{sec:example}

We consider the circuit level noise model with error rate $p$ for two-qubit gates, $p/10$ for single-qubit operations (gate and measurements) and either $p/1000$ for idle qubits or no idle noise. 
The errors are  depolarizing for qubits and bit-flip for measurements.

We also use the following approximations which are derived from averages over numerical data. The level-$D$ subcircuits $C_{D,j}$ of size $s'$ have $s'/2$ two-qubit gates and $s'/2$ single-qubit gates; The total number of single-qubit and two-qubit gates for RSP are $s'/2+2n$ single-qubit gates and $s'/2+n$ two-qubit gates;  The main source of idle errors is from qubits not participating in $C_{D,j}$ and  the number of idle gates (in RSP) is  $g_{\idlep} = s'n/3$. 

The error rate for RSP is 

$$p_p=1-(1-p)^{s'/2+n}(1-p/10)^{s'/2+2n}(1-p/1000)^{s'n/3}.$$

At levels above $D$ we have $$p_p = 1-(1-\plog')(1-p)^n(1-p/10)^{2n}(1-p/1000)^{\ts'}$$ where $\ts'$ is the expected gate count for all children (calculated as part of the estimation algorithm and extended to account for idle errors). 

For RSV we assume that each stabilizer measurement requires $6n/4$ two-qubit gates (A random controlled $2n$ qubit Pauli implemented using a sequence of controlled $X$, $Y$ or $Z$), two single-qubit gates and a measurement.   There are 15 different two-qubit errors but one has no impact ($X$ error on an $X$ eigenstate), eight impact the measurement of the ancilla (detectable errors) and six do not. So $$p_{de}=1-(1-8p/15)^{2n/3}(1-2p/30)^{2}(1-p/10),$$ and $$p_{ue} = 1-(1-6p/15)^{2n/3}.$$ For simplicity we derived an expression with no idle errors. These can be added using the same methodology.

For RSI we have a total of $n$ two-qubit gates and $4n$ single-qubit gates or  measurements. We assume that we minimize idling errors by timing the first operation in every branch correctly. When  multiple $\CliNR_1$ circuits feed into the same parent (i.e the corresponding $v_{\ell,j}$ have the same parent $v_{\ell-1,j'}$) then the first of these operations will have
$$p_I=1-(1-p)^{n}(1-p/10)^{4n}.$$
Subsequent rounds must also account for idle errors. We can use the same reasoning used to derive  \cref{eq:gtot}. At level $D$  the idle gate number from RSP remains  $g_{\idlep} = s'n/3$. At each stabilizer measurement we have $$(6n/4)(3n-2) = 4.5n^2-3n$$ idle gates so the expected total number of idle gates is 
\begin{align}
g_{\idle} &= \frac{s'n}{3} + r'(4.5n^2-3n) \\ \nonumber
&+\sum_k \left (\frac{s'n}{3} + k(4.5n^2-3n)\right)m_{\restart(k)} .
\end{align}

Finally $$p_I \approx 11-(1-p)^{n}(1-p/10)^{4n}(1-p/1000)^{g_{\idle}}.$$ 
Note that this is an approximation  that requires $g_{\idle} p/1000<<1$.  

At levels above $D$ we replace the $s'$ by the expected gate count $\ts'$.

\subsection{Optimizing the tree}
As noted previously, our aim is to find the optimal tree. This can be done by first itterating over a very large number of tree parameters using the Markov method and finding the Pareto frontier (as in \cref{fig:estimated-performance}). Once we identify a good parameter range, we can use slower, but more accurate method to find the best tree parameters around that range.

\section{Analysis of the numerical estimates} \label{app:estimates}

The Markov method introduced in \cref{sec:estimates} is expected to provide an estimate of the logical error rates and gate overheads which can be used in parameter selection. The use of various approximations in the error parameters is expected to lead to small deviations. The small deviations can compound as the tree becomes deeper, leading to less accurate results at larger $D$. Below we compare the results of the Markov model to the results form the stabilizer  simulation (see \cref{sec:simulations}). 

 \cref{fig:analyt_vs_sim_plog_1} shows the comparison between the two methods with and without idle noise. In both cases the parameters are $p=10^{-3}$ and $n=70$. All other parameters are chosen as in \cref{sec:numerical}.

Results with no idle noise \cref{fig:analyt_vs_sim_plog_1} (a), (b) have good agreement between the two methods. When idle noise is added, \cref{fig:analyt_vs_sim_plog_1} (c), (d), the logical error rates are still in good agreement, but the overhead estimates using the Markov method are  consistently lower than the simulation when the values are high. The error bars in simulation are also much larger when $\gateoverhead$ is large. 

In all cases, the relative performance of the data points is correct. This means that that both methods identify the same points as having better or worse performance, implying that the Markov model can be used for finding good tree parameters. 

We note that one expected imperfection in the application of the Markov method is the use of bulk error and gate-count estimates. The method can be improved by using a more using the exact Clifford circuits and sub-circuits to find gate counts and idle times. Moreover, other issues such as accurate error propagation are difficult to avoid without a direct simulation. The qualitative result should be sufficient to provide guidance for narrowing down the tree parameters before going into more accurate methods.

\begin{figure*}[h]
    \centering

    \includegraphics[width=0.48\linewidth]{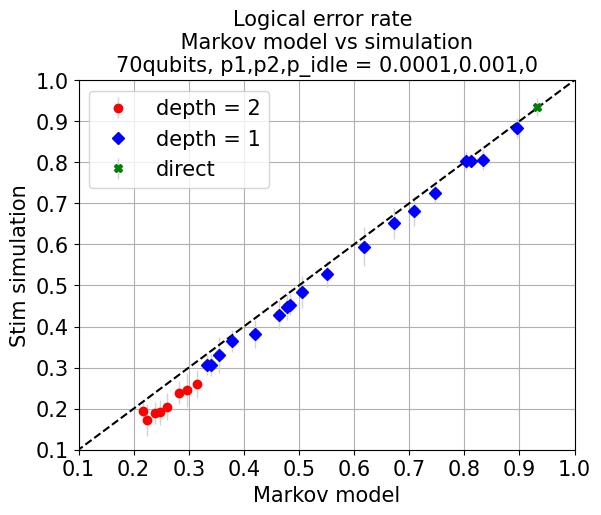}
     \includegraphics[width=0.48\linewidth]{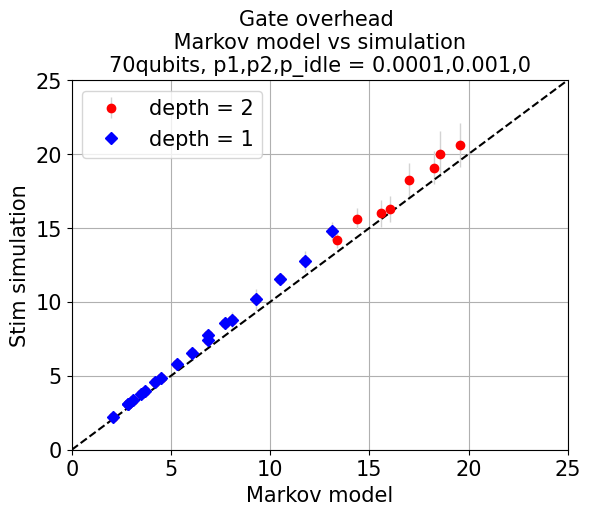}

    \hspace{10pt} (a) \hspace{240pt} (b)

    \vspace{10pt}

 \includegraphics[width=0.48\linewidth]{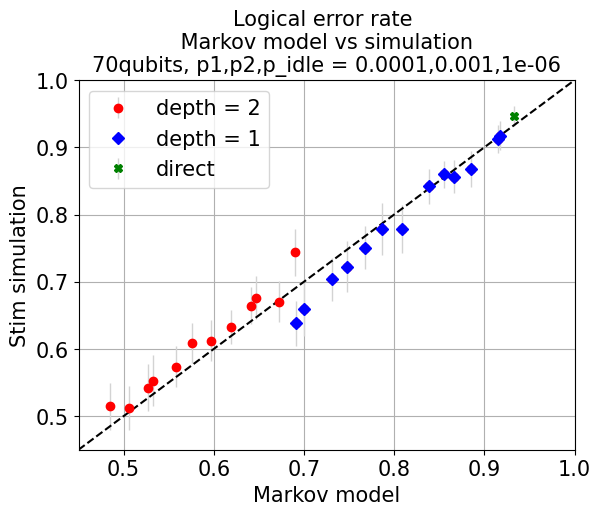}
     \includegraphics[width=0.48\linewidth]{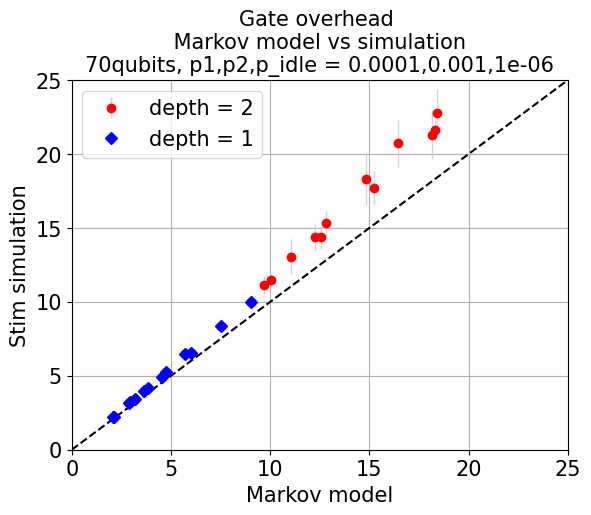}

         \hspace{10pt}   (c) \hspace{240pt} (d)

    \caption{Markov model  compared with the Monte Carlo Stim based simulation. }
    \label{fig:analyt_vs_sim_plog_1}
\end{figure*}

\end{document}